\newtheorem{theorem}{Theorem}[section]
\newtheorem{corollary}[theorem]{Corollary}
\newtheorem{remark}[theorem]{Remark}
\newtheorem{claim}[theorem]{Claim}
\newtheorem{observation}[theorem]{Observation}
\newtheorem{definition}[theorem]{Definition}
\newcommand{\bra}[1]{\left\{#1\right\}}
\newcommand{\cbra}[1]{\left\{#1\right\}}
\newcommand{\rbra}[1]{\left(#1\right)}
\newcommand{\mathify}[1]{\ifmmode{#1}\else\mbox{$#1$}\fi}
\DeclarePairedDelimiter\abs{\lvert}{\rvert}
\let\oldabs\abs
\def\abs{\@ifstar{\oldabs}{\oldabs*}}
\let\oldnorm\norm
\def\norm{\@ifstar{\oldnorm}{\oldnorm*}}
\newcommand{\zone}{\{0, 1\}}
\newcommand{\PDT}{\mathsf{PDT}}
\newcommand{\NADT}{\mathsf{NADT}}
\newcommand{\NAPDT}{\mathsf{NAPDT}}
\newcommand{\PDTsize}{\mathsf{PDTsize}}
\newcommand{\DT}{\mathsf{DT}}
\newcommand{\DTsize}{\mathsf{DTsize}}
\newcommand{\sDT}{\mathsf{sDT}}
\newcommand{\IP}{\mathsf{IP}}
\newcommand{\IND}{\mathsf{IND}}
\newcommand{\MAJ}{\mathsf{MAJ}}
\newcommand{\C}{\mathcal{C}}
\newcommand{\R}{\mathcal{R}}
\newcommand{\F}{\mathbb{F}}
\renewcommand{\epsilon}{\varepsilon}
\newcommand{\bin}{\mathsf{bin}}
\newcommand{\ip}[2]{\langle #1,#2 \rangle}
\newcommand{\eqq}{\stackrel{?}{=}}
\newcommand{\reslinftwo}{$Res(\oplus)$}
\newcommand{\rowreduce}{\textsf{ROW-REDUCE}}
\newcommand{\val}{\mathsf{val}}
\newcommand{\num}{\mathsf{num}}
\newcommand{\FINDVARS}{\mathsf{FINDVARS}}
\newcommand{\PropP}{\mathsf{STIFLE}}
\newcommand{\MARK}{\mathsf{MARK}}
\newcommand{\FREE}{\mathsf{FREE}}
\newcommand{\child}{\textnormal{child}}
\newcommand{\markedindex}{\textsf{markedindex}}
\newcommand{\set}{\mathsf{set}}
\newcommand{\unset}{\mathsf{unset}}
\DeclareMathOperator*{\argmin}{arg\,min}
\newcommand{\size}{\textnormal{size}}
\title{Lifting to Parity Decision Trees Via Stifling}
\author{Arkadev Chattopadhyay\thanks{Tata Institute of Fundamental Research, Mumbai. Partially supported by the MATRICS grant MTR/2019/001633 of the Science \& Engineering Research Board of the DST, India {\tt arkadev.c@tifr.res.in}}
\and
Nikhil S.~Mande\thanks{QuSoft and CWI, Amsterdam. Supported by the Dutch Research Council (NWO), as part of the Quantum Software Consortium programme (project number 024.003.037) {\tt Nikhil.Mande@cwi.nl}}
\and 
Swagato Sanyal\thanks{Indian Institute of Technology, Kharagpur. Supported by an ISIRD grant by SRIC, IIT Kharagpur {\tt swagato@cse.iitkgp.ac.in}}
\and
Suhail Sherif\thanks{Vector Institute, Toronto {\tt suhail.sherif@gmail.com}}}
\date{}
\begin{document}

\maketitle
\begin{abstract}
We show that the deterministic decision tree complexity of a (partial) function or relation $f$ lifts to the deterministic parity decision tree (PDT) \emph{size} complexity of the composed function/relation $f \circ g$ as long as the gadget $g$ satisfies a property that we call \emph{stifling}. We observe that several simple gadgets of constant size, like Indexing on 3 input bits, Inner Product on 4 input bits, Majority on 3 input bits and random functions, satisfy this property. It can be shown that existing randomized communication lifting theorems ([G{\"{o}}{\"{o}}s, Pitassi, Watson.~SICOMP'20], [Chattopadhyay et al.~SICOMP'21]) imply PDT-size lifting. However there are two shortcomings of this approach: first they lift \emph{randomized} decision tree complexity of $f$, which could be exponentially smaller than its deterministic counterpart  when either $f$ is a partial function or even a total search problem. Second, the size of the gadgets in such lifting theorems are as large as logarithmic in the size of the input to $f$. Reducing the gadget size to a constant is an important open problem at the frontier of current research.

Our result shows that even a random constant-size gadget does enable lifting to PDT size. Further, it also yields the first systematic way of turning lower bounds on the \emph{width} of tree-like resolution proofs of the unsatisfiability of constant-width CNF formulas  to lower bounds on the \emph{size} of tree-like proofs in the resolution with parity system, i.e.,~\emph{Res}($\oplus$), of the unsatisfiability of closely related constant-width CNF formulas.  
\end{abstract}

\section{Introduction}\label{sec: intro}
Theorems that lift the complexity of a function in a weaker model of computation to that of the complexity of a closely related function in a stronger model, have proved to be extremely useful and is a dominant theme of current research (see, for example,~\cite{GPW18,CKLM19,GPW20,CFKMP21,LMMPZ22}). One early use of this theme is in the celebrated work of Raz and McKenzie~\cite{RM99}, that built upon the earlier work of Edmonds et al.~\cite{ERIS91} to yield a separation of the hierarchy of monotone circuit complexity classes within NC. The central tool of that work is an argument lifting the query complexity of a (partial) relation $f$ to the two-party communication complexity of the composed relation $f \circ g$, where $g$ is a two-party function, now commonly called a gadget. It required much technical innovation to prove the natural intuition that, if the gadget $g$ is obfuscating enough, the best that the two players can do to evaluate $f \circ g$ is to follow an optimal decision tree algorithm $\mathcal{Q}$ for $f$, solving the relevant instance of $g$ to resolve each query of $\mathcal{Q}$ encountered along its path of execution. The major utility of this theorem lies in the fact that it reduces the difficult task of proving communication complexity lower bounds to the much simpler task of proving decision tree complexity lower bounds. Indeed, this point was driven home more recently, in the beautiful work of G{\"{o}}{\"{o}}s, Pitassi and Watson \cite{GPW18}, who established tight quadratic gaps between communication complexity and (rectangular) partition number, resolving a longstanding open problem. This was, arguably, largely possible as the task was reduced to finding an appropriate analog of that separation in the query world. This last work triggered a whole lot of diverse work on such lifting theorems. Such theorems have been proved for the randomized model \cite{GPW20,CFKMP21}, models with application to data-structures \cite{CKLM18}, proof and circuit complexity (see, for example, \cite{RNV16,GGKS20,GKMP20}), quantum computing \cite{ABK21}, extended formulations \cite{KMR17} etc.

Most of these lifting theorems work with gadgets whose size is at least logarithmic in the arity of the outer function. A current challenge in the area is to break this barrier and prove lifting with sub-logarithmic gadget size (see, for example, \cite{LMMPZ22}), ultimately culminating in constant-size gadgets. It is not clear whether existing ideas/methods can be pushed to achieve this. Faced with this, we consider lifting decision-tree (DT) complexity to intermediate models that are more complex than DT but simpler than 2-party communication. Parity decision tree (PDT) is a natural choice for such a model. Indeed, in tackling another major open problem in communication complexity, the Log-Rank Conjecture (LRC), researchers have realized the importance and utility of such intermediate models. For instance, the natural analogue of LRC is open for PDTs (see \cite{TWXZ13,TXZ16}) giving rise to a basic conjecture in Fourier analysis. The analogue of LRC was recently proved for AND-decision trees~\cite{KLMY21} bringing forth interesting techniques. Further, the randomized analog of LRC, known as the Log-Approximate-Rank Conjecture (LARC) was recently disproved \cite{CMS20} using a surprisingly simple counter-example. The key to discovering this counter-example lay in finding the corresponding counter-example against the analog of the LARC for PDTs. 

Taking cue from these developments, we consider lifting DT complexity using small gadgets to PDT and allied complexity in this work. We are able to find an interesting property of gadgets, which we call `stifling', that we prove is sufficient for enabling such lifting even with constant gadget size.
We define a Boolean function $g : \zone^m \to \zone$ to be \emph{$k$-stifled} if for all subsets of $k$ input variables and for all $b \in \zone$, there exists a setting of the remaining $m - k$ variables that forces the value of $g$ to $b$ (i.e., the values of the $k$ variables are irrelevant under the fixing of these $m - k$ variables). Formally,

\begin{definition}\label{def: stifled}
Let $g : \zone^m \to \zone$ be a Boolean function. We say that $g$ \emph{is $k$-stifled} if the following holds:
\begin{align*}
&\forall S \subseteq [m]~\textnormal{with}~|S| \leq k~\textnormal{and}~\forall b \in \zone,\\ &\exists~z \in \zone^{[m] \setminus S}~\textnormal{such that for all}~x \in \zone^m ~\textnormal{with}~ x|_{[m] \setminus S} = z, g(x) = b.
\end{align*}
\end{definition}

We refer the reader to \Cref{sec: notation} for formal definitions of complexity measures. 
We obtain a lifting theorem from decision tree complexity ($\DT(\cdot)$) of a function $f$ to parity decision tree size ($\PDTsize(\cdot)$, denoting the minimum number of nodes in a parity decision tree computing the function) of $f$ composed with a gadget $g$ that is stifled.\footnote{We remark here that even a lifting theorem from $\DT(f)$ to deterministic communication complexity of $f \circ g$ would not imply \Cref{thm: pdt size lifting} as a black box. This is because deterministic communication complexity can be much larger than the logarithm of PDTsize (for instance, for the Equality function). However a lifting theorem from randomized decision tree complexity of $f$ to randomized communication complexity of $f \circ g$ does imply a lifting theorem from \emph{randomized} decision tree complexity of $f$ to PDTsize of $f \circ g$ (see Claim~\ref{claim: randomized lifting implies dt-pdtsize lifting} and the following discussion).}
\begin{theorem}\label{thm: pdt size lifting}
Let $g:\{0,1\}^m \to \{0,1\}$ be a $k$-stifled function and let $f \subseteq \zone^n \times \R$ be a relation. Then,
\[
\PDTsize(f \circ g) \geq 2^{\DT(f) \cdot k}.
\]
\end{theorem}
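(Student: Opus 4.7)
The plan is to prove the contrapositive form: from any PDT $T$ of size $s$ computing $f \circ g$, extract a decision tree for $f$ of depth at most $\log_2(s)/k$, which yields $\DT(f) \leq \log_2(s)/k$, equivalently $s \geq 2^{\DT(f) \cdot k}$. The decision tree is built by simulating $T$ adaptively. The simulator maintains a current node of $T$, a set $I \subseteq [n]$ of already-committed blocks with chosen values $y_i^{\star} \in \{0,1\}^m$ for $i \in I$, and an affine subspace $V \subseteq \prod_{i \notin I}\{0,1\}^m$ of live values on the free blocks, consistent with the parity answers given along the current root-to-node path. The crucial invariant is that for every free block $i$, the projection $\pi_i(V) \subseteq \{0,1\}^m$ has codimension at most $k$; by $k$-stifling of $g$, this guarantees that for each target $b \in \{0,1\}$ there is some $y_i^{\star} \in \pi_i(V)$ with $g(y_i^{\star}) = b$.

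At each PDT node with parity query $\chi$, first use the committed values to split off the $I$-part of $\chi$. If the remaining free-block part is constant on $V$ then $\chi$'s value is determined and the simulator descends in $T$ without any decision-tree query. Otherwise, pick the child $b^{\star}$ of smaller subtree size (ensuring that the simulated path contains at most $\log_2 s$ branching steps) and attempt to add the constraint that $\chi$ equals $b^{\star}$ to $V$. Whenever this would push some free block $i^{\star}$'s projected codimension beyond $k$, the simulator issues a decision-tree query $z_{i^{\star}}$, commits a $y_{i^{\star}}^{\star} \in \pi_{i^{\star}}(V)$ with $g(y_{i^{\star}}^{\star}) = z_{i^{\star}}$ (available by $k$-stifling since the present codimension is at most $k$), moves $i^{\star}$ into $I$, and re-attempts the PDT step.

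The depth of the constructed decision tree equals the number of releases, which I would bound by a potential argument on $\Phi := \mathrm{codim}(V)$ in the free-block space. Each branching PDT step raises $\Phi$ by one; each release drops $\Phi$ by exactly $k$, since the released block's projection had saturated codimension $k$ just before the commitment. Because $\Phi$ starts at $0$, stays nonnegative, and accumulates at most $\log_2 s$ from branching steps, we obtain $k \cdot (\text{releases}) \leq \log_2 s$, so the decision-tree depth is at most $\log_2(s)/k$ as required.

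The main obstacle is showing that the invariant $\mathrm{codim}(\pi_i(V)) \leq k$ can actually be maintained across releases. Committing $y_{i^{\star}}^{\star}$ can collapse cross-block constraints that were previously stored in $V$ into pure per-block constraints on other free blocks, potentially raising several of their projected codimensions by more than one in a single step and leaving no $k$-stifling witness for a subsequent release. The delicate technical point is to engineer the simulator so that this cascade cannot occur; a natural fix is to maintain $V$ as a product $\prod W_i$ of per-block subspaces, which avoids cross-block coupling entirely at the cost of committing multiple per-block constraints (or multiple blocks simultaneously) whenever a PDT query couples several free blocks, together with a refined amortization of the potential that charges each PDT step by the number of free blocks it affects.
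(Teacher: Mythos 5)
Your outline matches the paper's strategy at the top level (simulate the PDT, descend to the smaller child whenever a query is undetermined, charge $k$ such halving steps per decision-tree query, and use stifling to control each block's $g$-value), but the step you flag as "the main obstacle" is precisely the crux of the proof, and your proposal leaves it open. Maintaining the invariant $\mathrm{codim}(\pi_i(V))\le k$ for every free block is not achievable in the form you describe: committing one block can convert cross-block constraints into per-block constraints on several other free blocks at once, and your suggested repair --- forcing $V$ to be a product $\prod_i W_i$ --- destroys the accounting, since a single parity coupling $t$ free blocks would have to be over-constrained into $t$ per-block constraints, so the potential could grow by far more than $1$ per branching step and the bound $k\cdot(\text{releases})\le\log_2 s$ no longer follows. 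As written, the proposal is therefore incomplete.

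The paper closes this gap by abandoning the per-step, per-block codimension invariant entirely and replacing it with an ordered row-reduction. Each incoming parity $P$ is processed block by block in a fixed order: for each block $j$ in turn, if $P$ restricted to the current free coordinates $\FREE[j]$ lies in the span of the parities already charged to block $j$, those parities are added to $P$ (zeroing out its restriction to $\FREE[j]$, with a correction bit tracking the answers); otherwise $P$ is charged to the \emph{first} such block $j$ and processing stops. Two consequences do all the work. First, the parities charged to a block have linearly independent restrictions to that block's free coordinates, so a block is charged at most $k$ times before the variable is queried; at that moment one uses $\FINDVARS$ to select $k$ coordinates on which those restrictions remain independent and applies stifling only to the \emph{other} $m-k$ coordinates to pin the block's $g$-value --- so the committed coordinates never interact with the linear system one still has to solve. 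Second, a parity charged to block $j$ vanishes on $\FREE[j_1]$ for all $j_1<j$, so the final matrix of charged parities restricted to the unset coordinates is block upper triangular with invertible diagonal blocks, hence has full row rank. The completion consistent with all recorded answers is then found \emph{once, at the end}, by solving this full-rank system, rather than by keeping every block's projected codimension small throughout the simulation. This decoupling --- stifling acts on the set coordinates, the parity answers are realized on the unset coordinates --- is the idea your proposal is missing.
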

In order to prove this, we start with a PDT for $f \circ g$ of size $2^d$, say, and construct a depth-$d/k$ DT for $f$ by `simulating' the PDT.
Using a similar simulation, we also give lower bounds on the \emph{subspace} decision tree complexity of $f \circ g$, where subspace decision trees are decision trees whose nodes can query indicator functions of arbitrary affine subspaces. Let $\sDT(\cdot)$ denote subspace decision tree complexity.

\begin{theorem}\label{thm: sdt depth lifting}
Let $g:\{0,1\}^m \to \{0,1\}$ be a $k$-stifled function, and let $f \subseteq \zone^n \times \R$ be a relation. Then,
\[
\sDT(f \circ g) \geq \DT(f) \cdot k.
\]
\end{theorem}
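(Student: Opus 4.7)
The plan is to mirror the top-down simulation used in \Cref{thm: pdt size lifting}, with the depth of the subspace decision tree now playing the role that the logarithm of its size played there. Given a subspace decision tree $T$ computing $f\circ g$ of depth $d$, the goal is to construct a decision tree $T'$ for $f$ of depth at most $\lfloor d/k\rfloor$. Writing $z\in\{0,1\}^n$ for the input to $f$, the state maintained during a walk on $T$ is a triple $(v,M,\rho)$: $v$ is the current node of $T$; $M\subseteq[n]$ is the set of outer coordinates already queried by $T'$ (the \emph{marked} coordinates); and $\rho$ is a partial assignment to the $nm$ gadget bits that fully specifies $x_i$ for each $i\in M$ with $g(\rho_i)=z_i$. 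Let $\Sigma_v$ denote the affine constraints accumulated along the root-to-$v$ path. The invariant to preserve is the following flexibility condition: for every $z'\in\{0,1\}^n$ agreeing with $z$ on $M$, there is an $x\in\{0,1\}^{nm}$ satisfying $\Sigma_v$, agreeing with $\rho$ on marked gadgets, and with $g(x_i)=z'_i$ for all $i$. At a leaf of $T$ this invariant produces an $x$ witnessing $(f\circ g)(x)=f(z)$, so the leaf's label is a valid output for $f(z)$.

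At a subspace query $Q_v$, the simulation first follows the forced branch if one exists; otherwise it picks either branch, appends the corresponding affine constraint, and checks the invariant. If the invariant still holds, the walk continues. Otherwise some unmarked coordinate $i^\ast$ has just lost flexibility: mark it by inserting a $z_{i^\ast}$-query in $T'$ and use \Cref{def: stifled} to extend $\rho$ on gadget $i^\ast$ --- choose $S\subseteq[m]$ of size at most $k$ large enough to absorb all prior constraints involving gadget $i^\ast$'s bits, and fix $[m]\setminus S$ to the stifling assignment that forces $g$ to value $z_{i^\ast}$ regardless of the bits in $S$, completing $\rho_{i^\ast}$.

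The main obstacle, and where the factor $k$ is extracted, is the amortization proving that each marking event can be charged against at least $k$ subspace queries, so that at most $d/k$ marks occur along any path of $T'$. The intended mechanism assigns to every unmarked gadget a ``load'' equal to the number of independent affine conditions $\Sigma_v$ currently places on its bits, and uses the $k$-stifled property with $S$ chosen to contain the loaded coordinates to certify that the flexibility invariant survives as long as every load stays at most $k$. The delicate step is that a subspace query is a \emph{global} affine constraint on the full $nm$-bit input and can touch many gadgets at once; the branch-choice rule at a non-forced query therefore must attribute the incremental load to a single unmarked gadget whose flexibility the new constraint actually erodes, so that the per-gadget accounting is well-defined and the total load across unmarked gadgets is bounded by the number of processed queries. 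Once this load-balancing is set up, the total load is at most $d$, giving $|M|\le d/k$ at every leaf of $T'$ and hence the required depth bound $\DT(f)\le \lfloor d/k\rfloor$.
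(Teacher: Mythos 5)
Your high-level plan (walk down the subspace tree, maintain a flexibility invariant, mark an outer coordinate when flexibility is about to be lost, and amortize marks against queries) has the right shape, but the step you yourself flag as ``the main obstacle'' is exactly where the argument is missing, and the mechanism you sketch for it fails as stated, for two reasons. First, the negative branch of a subspace query is not an affine condition: if the query is membership in a subspace cut out by parity constraints $\langle v_1,x\rangle=a_1,\dots,\langle v_c,x\rangle=a_c$, then ``not all satisfied'' is a union of cosets, so your set $\Sigma_v$ of accumulated affine constraints is not well defined along that branch and the linear-algebraic bookkeeping (and the use of stifling, which needs the surviving constraints on a gadget's free bits to be linearly independent parities) breaks down. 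Second, on the positive branch a single subspace query appends up to $c$ independent parity constraints at once, which can erode the flexibility of one gadget by many units or of many gadgets simultaneously; then ``total load $\le$ number of processed queries'' is false, and the rule ``attribute the incremental load to a single unmarked gadget'' cannot be implemented, so the charge of $k$ subspace queries per mark does not go through.

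The missing idea, which is how the paper closes this gap, is to never take a non-forced positive branch. Order the parity constraints of each subspace query canonically and feed them one at a time through the same row-reduction as in \Cref{alg:pdt to dt}: if every constraint of the query is forced by the partial assignment and prior answers, answer the subspace query accordingly with no mark; otherwise the first non-forced constraint marks exactly one block, and the simulation \emph{declares that single constraint violated} and takes the negative branch of the subspace query. This fixes both problems at once: the negative branch is now an explicit affine condition (the earlier constraints hold and this one is flipped), and each subspace query contributes at most one new independent constraint, hence at most one mark, so the total number of marks is at most the depth $d$ and the number of queries to $f$'s input is at most $d/k$. Without this ``deliberately answer wrong'' device your load accounting does not close, so as written the proposal has a genuine gap rather than a complete proof.
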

We observe in \Cref{claim: sdt lifting implies depth-size lifting} that the bounds in the above two theorems are equivalent up to logarithmic factors. More precisely we show that \Cref{thm: pdt size lifting} follows (up to a constant factor in the exponent in the RHS) from \Cref{thm: sdt depth lifting}, and that \Cref{thm: sdt depth lifting} follows (up to a $\log n$ factor in the RHS) from \Cref{thm: pdt size lifting}.
It is easy to see that both of the above theorems individually imply a lifting theorem from $\DT(f)$ to $\PDT(f \circ g)$. We choose to state this as a separate theorem since it is interesting in its own right, and we feel that the proof of \Cref{thm: pdt depth lifting} gives more intuition.

\begin{theorem}\label{thm: pdt depth lifting}
Let $g:\{0,1\}^m \to \{0,1\}$ be a $k$-stifled function, and let $f \subseteq \zone^n \times \R$ be a relation. Then,
\[
\PDT(f \circ g) \geq \DT(f) \cdot k.
\]
\end{theorem}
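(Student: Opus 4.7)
The plan is to simulate a parity decision tree $T$ of depth $d$ for $f \circ g$ by a standard decision tree $T'$ for $f$ whose depth is at most $d/k$. On input $x \in \{0,1\}^n$, $T'$ walks down $T$ while maintaining a joint affine subspace $V \subseteq \{0,1\}^{nm}$ (initially $\{0,1\}^{nm}$) of inputs consistent with the parity answers taken so far, together with a set $I \subseteq [n]$ of indices for which $x_i$ has already been queried (the ``committed'' blocks).

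The first step is a structural lemma about $k$-stifled gadgets: every affine subspace $W \subseteq \{0,1\}^m$ of dimension at least $m - k$ has $g$ taking both values on $W$. To prove this I would apply Gaussian elimination to the at most $k$ affine constraints defining $W$, identifying a set $S$ of at most $k$ coordinates whose columns in the constraint matrix are independent; $W$ then projects bijectively onto $\{0,1\}^{[m] \setminus S}$, and applying stifling with this very $S$ yields forcing-settings $z^0, z^1 \in \{0,1\}^{[m] \setminus S}$ for the two values of $g$, whose unique $W$-lifts witness both values.

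During the simulation, at each parity query $L = \bigoplus_i L_i$ in $T$: if $L$ is constant on $V$, follow the forced child; otherwise pick $c \in \{0,1\}$ and restrict $V \leftarrow V \cap \{y : L(y) = c\}$, reducing $\dim V$ by one. Maintain the invariant that $\dim V_i \geq m - k$ for every uncommitted $i$, where $V_i$ denotes the projection of $V$ onto block $i$. Whenever adding a constraint (from a cut, or cascading from a previous commit) would push some uncommitted $V_i$ below the threshold, preemptively commit $i$ first: query $x_i$, invoke the structural lemma at $\dim V_i = m - k$ to locate $y^{(i)}_0 \in V_i \cap g^{-1}(x_i)$, and restrict $V$ by $y^{(i)} = y^{(i)}_0$. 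Each commit reduces $\dim V$ by exactly $m - k$: of the $m$ coordinate-fixing equations added, only $m - k$ are independent modulo $V$'s existing constraints, because the codimension of $V_i$ inside $\{0,1\}^m$ is $k$. Cascaded commits are processed by adding constraints one at a time and committing any block that reaches the threshold, so the structural lemma applies at every commit.

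The accounting finishes the proof. Let $q$ be the number of non-constant cuts (so $q \leq d$) and $c$ the number of commits. Then $\dim V = nm - q - (m-k)c$ at termination; combined with the subadditivity $\sum_i \dim V_i \geq \dim V$, this gives $\sum_i (m - \dim V_i) \leq q + (m-k)c$. Committed blocks have $\dim V_i = 0$, contributing $mc$ to the left side, so $mc \leq q + (m-k)c$, i.e.~$kc \leq q \leq d$ and $c \leq d/k$. Since $T'$ makes one $x$-query per commit, $\DT(f) \leq d/k = \PDT(f \circ g)/k$, which rearranges to the claimed bound. The main obstacle I expect is the correctness of $T'$: for the leaf of $T$ reached by the simulation to be labeled with a valid output for $f(x)$, one must ensure that $V$ always contains some $y$ with $g(y^{(i)}) = x_i$ for every $i \in [n]$. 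This can be arranged by choosing, at each cut, the value of $c$ coherently with an evolving witness $y^* \in V$ that agrees with $x$ on all blocks, and by selecting $y^{(i)}_0$ at each commit as the $i$-th block of such a witness (whose existence propagates inductively from the structural lemma and from the initial $V = \{0,1\}^{nm}$); verifying that this witness survives the commit cascades is the delicate step.
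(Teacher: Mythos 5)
Your structural lemma (stifling implies $g$ is non-constant on every affine subspace of $\zone^m$ of codimension at most $k$) is correct and is a clean repackaging of the paper's use of $\FINDVARS$ together with the stifling property, and your dimension accounting $kc \le q \le d$ is a valid and rather elegant way to bound the number of queries. The gap is exactly where you flag it, and it is not merely delicate: it breaks the argument as stated. First, if the cut directions $c$ are chosen ``coherently with an evolving witness $y^*$ that agrees with $x$ on all blocks,'' then the path taken in $T$ --- and hence the queries made and the output of $T'$ --- depends on bits of $x$ that have not been queried, so $T'$ is not a decision tree for $f$. Second, if the cuts are instead chosen canonically, your invariant ($\dim V_i \ge m-k$ for every uncommitted block) is genuinely too weak, because it controls each block's projection separately and says nothing about how $V$ couples distinct blocks. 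Concretely, take $g=\IND_1$ ($m=3$, $k=1$), $n=2$, write block $i$ as $(a_i,b_i,c_i)$ with $a_i$ the address, and suppose the PDT queries the parities $a_1$, then $a_2$, then $b_1+b_2$, the simulation answering $0$ each time. Every projection $V_i$ still has dimension $2=m-k$, so your algorithm commits nothing and queries nothing, yet every $y\in V$ satisfies $g(y^{(1)})=b_1=b_2=g(y^{(2)})$: the pattern $(0,1)$ is unrealizable in $V$, so the leaf reached need not be a valid answer for all $x$ consistent with the (empty) set of queries.

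The paper closes precisely this hole with the \rowreduce\ step: a parity is charged (``marked'') to a block $j$ only after it has been made to vanish on the reserved free coordinates of all earlier blocks, and only if its restriction to the free coordinates of block $j$ is linearly independent of the restrictions of the parities previously charged to $j$. This produces the block-upper-triangular, full-row-rank system of \Cref{claim: correctness}, which is what allows the partial assignment to be completed retroactively so as to realize \emph{every} $w$ consistent with the queried bits while still reaching the same leaf. In the example above, the very first parity $a_1$ is already charged to block $1$ and (since $k=1$) immediately forces a query of $x_1$; your projection-dimension threshold systematically undercounts such charges. So the query-counting half of your proof is fine, but the correctness half needs to be replaced by something that tracks which parities are charged to which block and reserves coordinates on which those charged parities remain independent, as the paper does.
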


Examples of stifled gadgets (functions that are $k$-stifled for some integer $k \geq 1$) are the well-studied Indexing function and the Inner Product Modulo 2 function, as we show in \Cref{claim: ind property p} and \Cref{claim: ip stifled}. It is worth noting that we obtain tight lifting theorems for \emph{constant-sized} gadgets. Although there is no gadget of arity $1$ or $2$ that is stifled, both the Indexing gadget on three bits and the Majority gadget on three bits are stifled. Define the Majority function on $n$ input bits by $\MAJ_n(x) = 1$ iff $|\cbra{i \in [n]: x_i=1}| \geq n/2$.
Define the Indexing function as follows.
\begin{definition}[Indexing Function]
For a positive integer $m$, define the Indexing function, denoted $\IND_m : \zone^{m + 2^m} \to \zone$, by
\[
\IND_m(x, y) = y_{\bin(x)},
\]
where $\bin(x)$ denotes the integer in $[2^m]$ represented by the binary expansion $x$.
\end{definition}
In the above definition, we refer to $\cbra{x_i : i \in [m]}$ as the \emph{addressing variables}, and $\cbra{y_j : j \in [2^m]}$ as the \emph{target variables}.

\begin{definition}[Inner Product Modulo 2 Function]
    For a positive integer $m$, define the Inner Product Modulo 2 Function, denoted $\IP_m : \zone^{2m} \to \zone$, by
    \[
    \IP_m(x_1, \dots, x_n, y_1, \dots, y_n) = \oplus_{i = 1}^n (x_i \wedge y_i).
    \]
\end{definition}

We show that $\IND_m$ is $m$-stifled for all integers $m \geq 1$, and $\IP_m$ is $1$-stifled for all integers $m \geq 2$. We refer the reader to \Cref{app: stifled proofs} for proofs of the three claims below. We also show in \Cref{app: stifled proofs} that the Majority function $\MAJ_m$ on $m$ input bits is $(\lceil m/2 \rceil - 1)$-stifled and no Boolean function on $m$ input bits is $\lceil m/2 \rceil$-stifled (hence, `Majority is stiflest').
\begin{claim}\label{claim: ind property p}
For all integers $m \geq 1$, the function $\IND_m$ is $m$-stifled.
\end{claim}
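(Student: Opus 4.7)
The plan is to fix $S \subseteq [m + 2^m]$ with $|S| \leq m$ and a target value $b \in \zone$, and then explicitly construct a setting of the variables outside $S$ that forces $\IND_m = b$. Split $S$ into its addressing part $S_x$ and its target part $S_y$, and write $s := |S_x|$, so that $|S_y| \leq m - s$. The strategy is to choose the free addressing variables so that, regardless of how the adversary assigns the $S_x$-bits, the resulting address $\bin(x)$ avoids the ``bad'' index set $B := \{j : y_j \in S_y\}$; we then set every $y_j$ with $j \notin B$ to $b$, which forces $\IND_m(x,y) = b$ on every completion of $S$.

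The key observation that makes the strategy work is structural. Once the $m - s$ addressing bits outside $S_x$ are fixed, the set of addresses $\bin(x)$ reachable by varying the $S_x$-bits is a coset in $\F_2^m$ of the subspace spanned by the standard basis vectors indexed by $S_x$. As the fixing of the non-$S_x$ bits ranges over its $2^{m-s}$ possibilities, we sweep out all $2^{m-s}$ such cosets, which partition $\zone^m$. Since $|B| \leq m - s < 2^{m - s}$ for every integer $s$ with $0 \leq s \leq m$ (an immediate induction on $m - s$), pigeonhole supplies a coset entirely disjoint from $B$. I would fix the non-$S_x$ addressing variables to the values defining that coset, and set $y_j := b$ for all $j \notin B$.

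Correctness is then a direct unfolding: for every completion of the variables in $S$, the address $\bin(x)$ lies in the chosen coset and hence outside $B$, so $y_{\bin(x)}$ is a target variable pinned to $b$, giving $\IND_m(x, y) = b$. The only non-trivial step is the pigeonhole, which relies on the tight inequality $m - s < 2^{m - s}$; this is precisely where the bound $k = m$ in the claim is saturated, and it is the only point I expect to require any thought. The rest of the argument is bookkeeping, so I do not anticipate a real obstacle.
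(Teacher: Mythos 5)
Your proposal is correct and is essentially the paper's argument: set all unconstrained target variables to $b$ and choose the free addressing bits so that the reachable addresses avoid the constrained target indices, which is possible because $m-s < 2^{m-s}$. The only difference is presentational — you phrase the counting step directly as a pigeonhole over the $2^{m-s}$ disjoint cosets of reachable addresses, whereas the paper argues by contradiction via the same inequality $2^{m-\ell} \le m-\ell$.
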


\begin{claim}\label{claim: ip stifled}
For all integers $m \geq 2$, the function $\IP_m$ is $1$-stifled.
\end{claim}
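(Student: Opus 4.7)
The plan is to verify \Cref{def: stifled} directly for $\IP_m$ with $k=1$, handling the cases $|S|=0$ and $|S|=1$ separately. The case $|S|=0$ is trivial: for any $b \in \zone$, setting $x_1 = y_1 = b$ and all other variables to $0$ gives $\IP_m = b$. So the interesting case is $|S|=1$.

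For $|S|=1$, the variables split naturally into addresses and targets, but $\IP_m$ is symmetric under the swap $x \leftrightarrow y$, so it suffices to handle $S = \{x_i\}$ for some $i \in [m]$; the case $S = \{y_i\}$ follows by symmetry. The key observation is that the only term of $\IP_m = \bigoplus_{j=1}^m (x_j \wedge y_j)$ that depends on $x_i$ is $x_i \wedge y_i$, and this term is killed outright by setting $y_i = 0$. Once $y_i = 0$, the function reduces (independently of $x_i$) to $\bigoplus_{j \neq i}(x_j \wedge y_j)$ over the remaining $2(m-1)$ variables.

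It then remains, for any target value $b \in \zone$, to set these remaining variables so that the reduced XOR equals $b$. Here we use the hypothesis $m \geq 2$: pick any index $j \in [m] \setminus \{i\}$ (which exists), set $x_j = y_j = b$, and set all other variables to $0$. Then $x_j \wedge y_j = b$, every other term in the XOR vanishes, and the overall value of $\IP_m$ equals $b$ regardless of $x_i$. This certifies the stifling condition for $S = \{x_i\}$ and hence, via the $x \leftrightarrow y$ symmetry, for every singleton $S$.

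There is no real obstacle to the argument; the main content is recognizing that $m \geq 2$ is essential (for $m=1$, $\IP_1(x,y) = x \wedge y$ cannot be forced to $1$ by fixing $y$ alone while keeping the value independent of $x$, so $\IP_1$ is not $1$-stifled) and then choosing which remaining variable to "absorb" into to realize either possible output $b$.
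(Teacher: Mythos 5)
Your proof is correct and follows essentially the same route as the paper's: fix the partner variable $y_i=0$ to kill the only term involving the stifled variable, then use the remaining $\IP_{m-1}$ (non-constant since $m\geq 2$) to realize the target value $b$. Your write-up is slightly more explicit than the paper's in spelling out the final assignment ($x_j=y_j=b$, rest $0$) and in covering the trivial $|S|=0$ case required by \Cref{def: stifled}, but the idea is identical.
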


We also show that a random Boolean function (the output on each input is chosen independently to be 0 with probability 1/2 and 1 with probability 1/2) on $m$ input variables (where $m$ is sufficiently large) is $(\log m/2)$-stifled with high probability.
\begin{claim}\label{claim: random stifled}
Let $g : \zone^m \to \zone$ be a random Boolean function with $m$ sufficiently large. Then with probability at least $9/10$, $g$ is $((\log m)/2)$-stifled.
\end{claim}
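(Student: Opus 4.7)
The plan is a standard probabilistic argument: fix a candidate bad $(S,b)$, bound the failure probability for that pair, and then take a union bound over all choices of $S$ with $|S| \le k$ and $b \in \{0,1\}$, where $k = (\log m)/2$.

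Fix $S \subseteq [m]$ with $|S| = s \le k$ and $b \in \{0,1\}$. For each assignment $z \in \zone^{[m] \setminus S}$, let $E_z$ denote the event that $g(x) = b$ for every $x \in \zone^m$ with $x|_{[m] \setminus S} = z$. Since the values $\{g(x) : x|_{[m] \setminus S} = z\}$ are $2^s$ independent unbiased bits, we have $\Pr[E_z] = 2^{-2^s}$. Moreover, for two distinct assignments $z, z'$ the corresponding sets of inputs are disjoint, so the events $\{E_z\}_{z}$ are mutually independent. Hence the probability that $g$ fails to be stifled at $(S,b)$ is
\[
\Pr\!\left[\bigcap_{z} \overline{E_z}\right] = \left(1 - 2^{-2^s}\right)^{2^{m-s}} \le \exp\!\left(- 2^{\,m - s - 2^s}\right).
\]
Since $s \mapsto s + 2^s$ is increasing, plugging in $s \le k = (\log m)/2$ yields the uniform bound $\exp\!\bigl(- 2^{\,m - (\log m)/2 - \sqrt{m}}\bigr)$.

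Now take a union bound. The number of sets $S \subseteq [m]$ with $|S| \le k$ is at most $(k+1)\binom{m}{k} \le m \cdot m^k = m^{k+1} = 2^{((\log m)^2 + 2\log m)/2}$, and there are two choices for $b$. Therefore
\[
\Pr[g \text{ is not } k\text{-stifled}] \;\le\; 2 \cdot 2^{((\log m)^2 + 2\log m)/2} \cdot \exp\!\left(- 2^{\,m - (\log m)/2 - \sqrt{m}}\right).
\]
The exponent $2^{\,m - (\log m)/2 - \sqrt{m}}$ is doubly larger than the polylogarithmic factor $((\log m)^2)/2$ in front, so for $m$ sufficiently large this quantity is at most $1/10$, completing the proof.

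The argument is essentially routine; there is no real obstacle. The only thing worth being careful about is choosing the right size for $k$: one needs $2^k$ small enough that the single-$(S,b)$ failure probability $\exp(-2^{m-k-2^k})$ is negligible (forcing $2^k \ll m$, i.e. $k \le (\log m)(1-o(1))$), and simultaneously $k$ small enough that $m^k$ does not eat the union bound. The choice $k = (\log m)/2$ lands comfortably in the allowed range.
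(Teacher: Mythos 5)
Your proof is correct and follows essentially the same route as the paper's: bound the failure probability for a fixed $(S,b)$ by $(1-2^{-2^{s}})^{2^{m-s}}$ using independence across the $2^{m-s}$ disjoint sub-cubes, then union bound over $S$ and $b$. The only (minor, favorable) difference is that you union over all $|S|\le k$ rather than only $|S|=k$ as the paper does, which sidesteps the need to observe that stifling for sets of size exactly $k$ implies stifling for smaller sets.
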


\Cref{thm: pdt depth lifting},~\Cref{thm: sdt depth lifting} and~\Cref{thm: pdt size lifting} imply the following results for the Indexing gadget in light of \Cref{claim: ind property p}.
\begin{corollary}\label{cor: indexing lifting}
Let $f \subseteq \zone^n \times \R$ be a relation. Then for all positive integers $m$,
\begin{align*}
\PDT(f \circ \IND_m) & \geq \DT(f) \cdot m\\
\sDT(f \circ \IND_m) & \geq \DT(f) \cdot m,\\
\PDTsize(f \circ \IND_m) & \geq 2^{\DT(f) \cdot m}.
\end{align*}
\end{corollary}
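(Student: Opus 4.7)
The plan is to observe that \Cref{cor: indexing lifting} is an immediate combination of the three lifting theorems already stated (\Cref{thm: pdt depth lifting}, \Cref{thm: sdt depth lifting}, and \Cref{thm: pdt size lifting}) with the stifling property of the Indexing gadget recorded in \Cref{claim: ind property p}. There is no substantive new argument needed; the only task is to verify that the hypotheses of each theorem are met.

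First I would invoke \Cref{claim: ind property p}, which asserts that $\IND_m : \zone^{m + 2^m} \to \zone$ is $m$-stifled for every positive integer $m$. This justifies taking $k = m$ and $g = \IND_m$ in each of the three theorems. Then for an arbitrary relation $f \subseteq \zone^n \times \R$, I would plug these choices into \Cref{thm: pdt depth lifting} to obtain $\PDT(f \circ \IND_m) \geq \DT(f) \cdot m$, into \Cref{thm: sdt depth lifting} to obtain $\sDT(f \circ \IND_m) \geq \DT(f) \cdot m$, and into \Cref{thm: pdt size lifting} to obtain $\PDTsize(f \circ \IND_m) \geq 2^{\DT(f) \cdot m}$. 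Stacking these three inequalities yields the corollary.

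Since the argument is purely by substitution, there is essentially no obstacle. The only thing worth flagging for the reader is the slight asymmetry in the stifling parameter: the Indexing function on $m + 2^m$ input bits achieves the strong value $k = m$, so the lifting bound scales linearly in the number of addressing variables (equivalently, logarithmically in the number of target variables). This is what makes Indexing attractive as a gadget here, since the lifting loss per query is as large as the addressing length of the gadget.
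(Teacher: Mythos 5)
Your proposal is correct and is exactly the paper's own argument: the corollary is obtained by instantiating Theorems \ref{thm: pdt depth lifting}, \ref{thm: sdt depth lifting}, and \ref{thm: pdt size lifting} with $g = \IND_m$ and $k = m$, justified by Claim \ref{claim: ind property p}. Nothing further is needed.
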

We now show that the Parity function witnesses tightness of the above corollary. It is easy to see that for all relations $f$,
\[
\PDT(f \circ \IND_m) \leq \DT(f) \cdot (m + 1).
\]
In fact, our proof can be modified to show $\PDT(f \circ \IND_m) \geq \DT(f) \cdot m + 1$ (see \Cref{rmk: improved simulation}). Let $\oplus_n$ denote the Parity function on $n$ input variables. It outputs 1 if the number of 1s in the input is odd, and 0 otherwise. It is well known that $\DT(\oplus_n) = n$.
We have $\PDT(\oplus_n \circ \IND_m) \leq \DT(\oplus_n) \cdot m + 1 = nm + 1$: query all the $m$ addressing variables in each block to find out the relevant variables using $nm$ queries. The output of the function (which is the parity of all relevant target variables) can now be computed using a single parity query. Thus our result is tight.

Since $\IP_m$ is $1$-stifled for all integers $m \geq 2$ (\Cref{claim: ip stifled}), we analogously obtain the following results for the Inner Product Modulo 2 gadget.
\begin{corollary}\label{cor: ip lifting}
Let $f \subseteq \zone^n \times \R$ be a relation. Then for all positive integers $m \geq 2$,
\begin{align*}
\PDT(f \circ \IP_m) & \geq \DT(f),\\
\sDT(f \circ \IP_m) & \geq \DT(f),\\
\PDTsize(f \circ \IP_m) & \geq 2^{\DT(f)}.
\end{align*}
\end{corollary}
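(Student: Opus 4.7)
The plan is to invoke the three lifting theorems (\Cref{thm: pdt depth lifting}, \Cref{thm: sdt depth lifting}, \Cref{thm: pdt size lifting}) as black boxes, instantiated with the gadget $g = \IP_m$ and the stifling parameter $k = 1$ supplied by \Cref{claim: ip stifled}. Since \Cref{claim: ip stifled} asserts that $\IP_m$ is $1$-stifled for every $m \geq 2$, each of the three theorems applies verbatim to any relation $f \subseteq \zone^n \times \R$ with this choice of $g$ and $k$.

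Concretely, I would first substitute $g = \IP_m$ and $k = 1$ into \Cref{thm: pdt depth lifting} to obtain $\PDT(f \circ \IP_m) \geq \DT(f) \cdot 1 = \DT(f)$. The same substitution into \Cref{thm: sdt depth lifting} yields $\sDT(f \circ \IP_m) \geq \DT(f)$, and into \Cref{thm: pdt size lifting} yields $\PDTsize(f \circ \IP_m) \geq 2^{\DT(f) \cdot 1} = 2^{\DT(f)}$. These are exactly the three inequalities in the statement of the corollary.

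There is no real obstacle here beyond checking that the hypotheses of the lifting theorems are met. The only substantive content is packaged into \Cref{claim: ip stifled} (proved in the appendix), whose role is simply to verify \Cref{def: stifled} with $k = 1$ for the inner product gadget. Once that is in hand, the corollary is an immediate instantiation, requiring no further argument.
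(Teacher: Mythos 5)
Your proposal is correct and is exactly the paper's own argument: the corollary is obtained by instantiating \Cref{thm: pdt depth lifting}, \Cref{thm: sdt depth lifting}, and \Cref{thm: pdt size lifting} with $g = \IP_m$ and $k = 1$, as justified by \Cref{claim: ip stifled}. Nothing further is needed.
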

Since a random gadget on $m$ inputs is $(\Omega(\log m))$-stifled (Claim~\ref{claim: random stifled}), we obtain analogous results for random gadgets as well.
\subsection{Consequence for proof complexity}

A central question in the area of proof complexity is ``Given an unsatisfiable CNF $\C$, what is the size of the smallest polynomial-time verifiable proof that $\C$ is unsatisfiable?'' To give lower bounds on the size of such proofs for explicit formulas $C$, the question has been studied with the restriction that the proof must come from a simple class of proofs. Two of these classes are relevant to us: resolution and linear resolution. Resolution is now well understood and strong lower bounds on the size of resolution based proofs are known for several basic and natural formulas, like the pigeonhole principle, Tseitin formulas and random constant-width CNFs. However, linear resolution introduced by Raz and Tzameret \cite{RT08}, where the clauses are conjunctions of affine forms over $\mathbb{F}_2$, are poorly understood. In fact, it remains a tantalizing challenge to prove super-polynomial lower bounds on the size of linear resolution proofs for explicit CNFs. One promising way to make progress on this question is to prove a lifting theorem that lifts ordinary resolution proof-size lower bounds for a CNF to linear resolution proof-size lower bounds for a lifted CNF. Indeed, relatively recently, riding on the success and popularity of lifting theorems on communication protocols, Garg, G{\"{o}}{\"{o}}s, Kamath and Sokolov \cite{GGKS20} showed that ordinary resolution proof-size complexity, among other things, can be lifted to cutting plane proof-size complexity. However, lifting to linear resolution proof-size still evades researchers.

A first step towards the above would be to lift \emph{tree-like} resolution proof-size bounds to tree-like linear resolution proof-size bounds. Although lower bounds for tree-like linear resolution proof-size were established by Itsykson and Sokolov \cite{IS20}, there was no systematic technique known that would be comparable to the generality of a lifting theorem. Our lifting theorem provides the first such technique as explained below.

The starting point is the well-known fact that the size of the smallest tree-like resolution proof that $\C$ is unsatisfiable is the same as the size of the smallest decision tree that computes the relation $S_{\C} := \cbra{(x,C):C\text{ is a clause of }\C\text{ and }C(x)=0}$. Similarly, the size of the smallest tree-like \reslinftwo~(linear resolution over $\F_2$) proof that $\C$ is unsatisfiable is the same as the size of the smallest parity decision tree that computes $S_{\C}$.

Let $\C$ be a CNF with $n$ variables and $t$ clauses, each of width at most $w$. For a gadget $g: \zone^m \to \zone$, we define the CNF $\C \circ g$ as follows. For each clause $C \in \C$, take the clauses of a canonical CNF computing $\Gamma_C : (y_1,\dots,y_n) \in (\zone^m)^n \mapsto C(g(y_1),g(y_2),\dots,g(y_n))$. Note that $\Gamma_C$ is a function of at most $mw$ variables, and thus can be expressed as a CNF with at most $2^{mw}$ clauses, each of width at most $mw$. Define $\C \circ g$ to be the CNF whose clauses are $\bigcup_{C \in \C} \Gamma_C$.

Our results in this paper yield PDT lower bounds against $S_{\C} \circ g$ given DT lower bounds against $S_{\C}$. To compute $S_{\C} \circ g$ we get as input $y = (y_1,\dots,y_n) \in (\zone^m)^n$ and we have to output a clause $C \in \C$ such that $C(x)=0$ where $x = (g(y_1),\dots,g(y_n))$. On the other hand to compute $S_{\C \circ g}$ we get as input $y$ and we have to output a clause $D \in \bigcup_{C \in \C} \Gamma_C$ such that $D(y)=0$. However if $D(y) = 0$ and $D \in \Gamma_C$, then by definition of $\Gamma_C$, $C(x)=0$ where $x = (g(y_1),\dots,g(y_n))$. Hence computing $S_{\C \circ g}$ is sufficient to compute $S_{\C} \circ g$ and our lower bounds against $S_{\C} \circ g$ translate to lower bounds against $S_{\C \circ g}$, bringing us back to the realm of proof complexity.

Our size lifting (\Cref{thm: pdt size lifting}) is particularly relevant, when applied with $g$ as the Indexing gadget on 3 bits, for example. Let $\C$ be a 3-CNF $\C$ with $n$ variables and $t$ clauses, with tree-like resolution width at least $d$. Our theorem implies a tree-like \reslinftwo~size lower bound of $2^d$ for the CNF $\C \circ \IND_1$, with $3n$ variables and $O(t)$ clauses, each of maximum width at most $9$.

\subsection{Related work}
Independently and concurrently with our work, Beame and Koroth~\cite{BK22} obtained closely related results. They show that a particular property of a gadget, conjectured recently by Lovett et al.~\cite{LMMPZ22} to be sufficient for lifting using constant size, is in fact insufficient to yield constant-size lifting. Using properties special to Indexing, they obtain a lifting theorem from decision tree height complexity of $f$ to the complexity of $f \circ \IND$ in a restricted communication model that they define, where $\IND$ has constant size. Further, modifying the proof of this result they also lift decision tree complexity of $f$ to PDT size complexity of $f \circ \IND$.

We, on the other hand, prove directly a lifting theorem for PDT size, from deterministic decision tree height, that works for any constant size gadget that satisfies the abstract property of stifling. Examples of such gadgets are $\IND$, $\IP$, Majority and random functions. While there are some similarities between our and their arguments, especially in the use of row-reduction, our proof is entirely linear algebraic with no recourse to information theory.
\section{The simulation}

In \Cref{sec: notation} we introduce necessary notation for proving our theorems. In \Cref{sec: overview simulation} we give an overview of the simulation for \Cref{thm: pdt depth lifting} (i.e., we show an algorithm that takes a low-depth PDT computing $f \circ g$ and constructs a low-depth DT computing $f$). We do this for the sake of simplicity; the simulations used to prove \Cref{thm: pdt size lifting} and \Cref{thm: sdt depth lifting} are small modifications of the simulation used to prove \Cref{thm: pdt depth lifting}. In \Cref{sec: simulation alg} we show the simulation in full detail, with a key subroutine explained in \Cref{sec: rowreduce}. We then analyze correctness of the simulation in \Cref{sec: correctness}. We conclude the proofs of all of our theorems in \Cref{sec: all simulation theorems}.

\subsection{Notation}\label{sec: notation}
Throughout this paper, we identify $\zone$ with $\F_2$. We identify vectors in $\F_2^n$ with their characteristic set, i.e., a vector $x \in \F_2^n$ is identified with the set $\cbra{i \in [n] : x_i = 1}$. For vectors $a, b \in \F_2^n$, let $\langle a, b \rangle$ denote $\sum_{i = 1}^n a_ib_i$ modulo 2. We say a vector in $\F_2^n$ is non-zero if it does not equal $0^n$. For a positive integer $n$ we use the notation $[n]$ to denote the set $\cbra{1, 2, \dots, n}$ and the notation $[n]_0$ to denote the set $\cbra{0, 1, \dots, n-1}$. Throughout this paper $\R$ denotes an arbitrary finite set. When clear from context, `$+$' denotes addition modulo 2.

\subsubsection{Decision trees and complexity measures}

A \emph{parity decision tree} (PDT) is a binary tree whose leaf nodes are labeled in $\R$, each internal node is labeled by a parity $P \in \F_2^n$ and has two outgoing edges, labeled $0$ and $1$.
On an input $x \in \zone^n$, the tree's computation proceeds from the root down as follows: compute $\val(P,x) := \ip{P}{x}$ as indicated by the node's label and follow the edge indicated by the computed value. Continue in a similar fashion until reaching a leaf, at which point the value of the leaf is output. When the computation reaches a particular internal node, the PDT is said to \emph{query} the parity label of that node.
A decision tree (subspace decision tree, respectively), denoted DT (sDT, respectively), is defined as above, except that queries are to individual bits (indicator functions of affine subspaces of $\F_2^n$, respectively) instead of parities as in PDTs.

A DT/PDT/sDT is said to compute a relation $f \subseteq \zone^n \times \R$ if the output $r$ of the DT/PDT/sDT on input $x$ satisfies $(x, r) \in f$ for all $x \in \zone^n$.
The DT/PDT/sDT complexity of $f$, denoted $\DT(f)/\PDT(f)/\sDT(f)$, is defined as
\[
\DT(f)/\PDT(f)/\sDT(f) := \min_{T : T~\text{is a DT/PDT/sDT computing}~f} \textnormal{depth}(T).
\]
The decision tree size (parity decision tree size, respectively) of $f$, denoted $\DTsize(f)$ ($\PDTsize(f)$, respectively), is defined as
\[
\DTsize(f)/\PDTsize(f) := \min_{T : T~\text{is a DT/PDT computing}~f} \textnormal{size}(T),
\]
where the size of a tree is the number of leaves in it.

\subsubsection{Composed relations and simulation terminology}

For positive integers $n,m$, a relation $f \subseteq \zone^n \times \R$, a function $g : \zone^{m} \to \zone$ and strings $\cbra{y_i \in \zone^{m} : i \in [n]}$, we use the notation $g^n(y_1, \dots, y_n)$ to denote the $n$-bit string $(g(y_1), \dots, g(y_n))$. The relation $f \circ g \subseteq (\zone^{m})^{n} \times \R$ is defined as $(y_1, \dots, y_n, r) \in (f \circ g) \iff (g(y_1), \dots, g(y_n), r) \in f$.

We view the input to $f \circ g$ as an $mn$-bit string, with the bits naturally split into $n$ \emph{blocks}. A parity query $P$ to an input of $f \circ g$ can be specified as an element of $(\F_2^{m})^{n}$. For $i \in [n]$ and a parity query $P$, $P|_i \in \F_2^{m}$ refers to the restriction of $P$ to the $i$th block. For a parity query $P$ and a set of indices $S \subseteq [m] \times [n]$, $P|_S$ refers to the restriction of $P$ to the set of indices $S$.
We say parity $P$ \emph{touches block $i$} or $P$ \emph{touches a set $S$} if $P|_i$ is non-zero or $P|_S$ is non-zero, respectively.

A string $y' \in (\zone^{m})^n$ is said to be a \emph{completion} of a partial assignment $y \in (\{0,1,*\}^{m})^{n}$ if $y'_i = y_i$ for all $i \in [m] \times [n]$ with $y_i \neq *$.

For a node $N$ in a PDT and bit $a \in \zone$, we use $\child(N, a)$ to denote the node reached on answering the parity at the node $N$ as $a$.
For a parity $P$ and a partial assignment $y$ such that $y_i \neq *$ for all $i \in P$, $\val(P,y)$ denotes $\ip{P}{y}$.

\subsection{Overview of the simulation}\label{sec: overview simulation}

In this subsection we sketch and describe our simulation algorithm with the goal of proving \Cref{thm: pdt depth lifting}. The proofs of \Cref{thm: pdt size lifting} and \Cref{thm: sdt depth lifting} follow along similar lines, and are formally proved in \Cref{sec: all simulation theorems}.
We prove Theorem~\ref{thm: pdt depth lifting} via a simulation argument: given a PDT of depth $d$ for $f \circ g$, we construct a DT of depth at most $d/k$ for $f$.

On input $x \in \zone^n$, our decision tree algorithm simulates the PDT traversing a path from its root to a leaf, and outputs the value that the PDT outputs at that leaf. During this traversal the decision tree algorithm queries bits of $x$. It also keeps track of a partial assignment $y \in (\{0,1,*\}^{m})^{n}$ satisfying the following property that guarantees correctness: For all $w \in \zone^n$ consistent with the queried bits of $x$ there is a completion $y'$ of $y$ such that
\begin{itemize}
    \item $g^n(y') = w$, and
    \item the path traversed by the PDT on input $y'$ is the same as that traversed in the simulation.
\end{itemize}
This ensures that once the simulation reaches a leaf of the PDT, every input $w$ that is consistent with the queried bits of $x$ has the same output as the output at the leaf. That is, the simulation is a valid query algorithm for $f$.

To carry out the simulation we assign to every parity query along the path a block that the parity query touches, unless its output value is already determined by $y$. We say that the parity query has that block \emph{marked}. At a high level, if a block has been marked by only a few parity queries, then we have enough freedom in the block to do the following:
\begin{itemize}
    \item set its output (i.e., the value of $g$ on the variables in the block) to any bit we wish by setting some of the variables in the block (this uses the stifling property of $g$), and
    \item retroactively complete this partial assignment so that each of the parity queries marking the block evaluate to any bit of our choosing on the completed input. \end{itemize}
We use these to prove the property that guarantees correctness. The first point helps us in getting a partial assignment $y$ such that $g^n(z) = w$ for all completions $z$ of $y$, and the second helps us complete it to a $y'$ that reaches the leaf we want it to reach. A formal statement of the required properties is in \Cref{claim: correctness}.

During the simulation once a block, say block $i$, has been marked by $k$ parity queries we carefully choose $k$ bits of the block that we will keep unset in our partial assignment throughout the simulation. These $k$ bits are what allow us to perform the completion as in the second bullet above. We then query the value of $x_i$ and set the other $m-k$ bits in block $i$ to ensure that the output of block $i$ is set to $x_i$. The fact that we can do this crucially uses the assumption that the gadget $g$ is $k$-stifled. In the process above, every parity in the PDT simulation marks at most one block, and a variable $x_i$ is not queried until the $i$th block is marked $k$ times. Thus the depth of the resultant DT is at most $d/k$.

A situation that may arise is when a parity $P$ marks block $i$ but a previous parity query $P'$ that marked block $i$ has $P|_i = P'|_i$, for example. However, at some point we may be required to set $P|_i$ to $0$ and $P'|_i$ to $1$ in order to follow the path in the simulation, which is impossible.
To avoid such issues we preprocess each parity query $P$ to ensure that when it marks block $i$, $P|_i$ is not in the linear span of $\cbra{P'|_i : P' \text{ is an earlier parity that marks block }i}$. The simulation algorithm is given in \Cref{alg:pdt to dt} with the preprocessing subroutine given in \Cref{alg:rowreduce}. \Cref{alg:pdt to dt} uses two other functions, described below.

\begin{definition}\label{def: propp fn}
    Let $g : \zone^m \to \zone$ be a $k$-stifled Boolean function. We define the function $\PropP_g$ to be a canonical function that takes as input a set $S \subseteq [m]$ with $\abs{S} \leq k$ and a bit $b$ and outputs a partial assignment $y \in \{0,1,*\}^m$ such that
    \begin{itemize}
        \item for all $i \in [m]$, $i \in S \iff y_i = *$, and
        \item for all completions $y'$ of $y$, $g(y') = b$.
    \end{itemize}
\end{definition}
The existence of such a function in the definition above is guaranteed since $g$ is $k$-stifled.

\begin{definition}\label{def: findvars}
Define the function $\FINDVARS$ to be a canonical function that takes as input a set $\cbra{P_1,\dots,P_{\ell}}$ of $\ell$ linearly independent vectors in $\F_2^{m}$ and outputs a set $S$ of $\ell$ indices in $[m]$ such that $\cbra{P_i|_S}_{i \in [\ell]}$ are also linearly independent vectors.
\end{definition}
The existence of such a function in the definition above is guaranteed by the fact that an $\ell \times m$ matrix of rank $\ell$ has $\ell$ linearly independent columns, and the $\ell \times \ell$ matrix defined by restricting the original matrix to these columns also has rank $\ell$.

\subsection{The simulation algorithm}\label{sec: simulation alg}

In Algorithm~\ref{alg:pdt to dt}, we start with a PDT $T$ of size $2^{d}$ for $f \circ g$, and an unknown input $x \in \zone^n$. Our simulation constructs a decision tree for $f$ of depth at most $d/k$.\footnote{Our algorithm is designed to work with $k \geq 1$. For instance, Line~\ref{line: markandkth} is intended to catch the $k$th time a block is marked, which does not make sense for $k=0$.\label{footnote: kgreaterthanzero}} This would prove \Cref{thm: pdt size lifting}.
We use the following notation in Algorithm~\ref{alg:pdt to dt}:
\begin{itemize}
    \item $\num$ is the number of parity queries simulated in $T$ so far.
    \item $M$ is an array storing parity queries made so far (after processing).
    \item $A$ is an array storing the answers to the parities in $M$. These are such that answering the queries in $M$ with the answers in $A$ is equivalent to answering the queries in the PDT to reach the node reached by the simulation algorithm.
    \item For all $i \in [n]$, $\MARK[i]$ is the set of indices in $M$ of parity queries that have the $i$th block marked. That is, $\MARK[i] = \cbra{j \in [\num]_0 : M[j] \text{ marks block }i}$.
    \item $y$ is the partial assignment stored by the simulation algorithm. It is initialized to $(*^m)^n$.
    \item For all $i \in [n]$, $\FREE[i]$ is the set of indices in block $i$ of $y$ that have not yet been set. That is, $\FREE[i] = \cbra{(i,j) : j \in [m], y_{i,j} = *}$. It is initialized to $\cbra{i} \times [m]$ for each $i \in [n]$.
    \item $Q$ is the set of indices of $x$ queried during the simulation.
\end{itemize}

The algorithm starts at the root node of the PDT and does the following for each parity query $P$ it comes across in its traversal down the PDT (Line~\ref{line: simulationloop}).
\begin{itemize}
    \item In Line~\ref{line: rowreduce} it processes the parity query using the \rowreduce~subroutine. The subroutine outputs a parity $P'$, a `correction' bit $b$ and an index $j \in [n]$ denoting which block was marked by the parity (or $\bot$ if no block was marked). 
As stated in Claim~\ref{clm: processedequivalence}, these outputs satisfy the following: for all $y \in \zone^{mn}$,
    \[ \Big[\val(M[i], y) = A[i]\, \forall i \in [\num]_0\Big] \implies \Big[\val(P, y) = 0 \iff \val(P', y) = b\Big]. \]
    In other words, for all inputs consistent with the answers to the previous parity queries made along the path, $P$ evaluates to $0$ if and only if $P'$ evaluates to $b$.
    \item It adds $P'$ to the list $M$ of processed parities in Line~\ref{line: mnump'}. There are now two possibilities:
    \begin{itemize}
        \item If $P'$ does not mark a block (Line~\ref{line: nomark}), then 
all variables appearing in $P'$ are already set in $y$ (see \Cref{claim: processedstructure}). Hence $A[\num]$ is set to $\val(P',y)$ (Line~\ref{line: setval}) and the simulation algorithm answers $\val(P',y)+b$ to the original parity query at the current node (Line~\ref{line: setchild}).
        \item If $P'$ marks block $j$:
        \begin{itemize}
            \item If this is the $k$th parity to mark block $j$ (Line~\ref{line: markandkth}), the simulation queries the value of $x_j$ (Line~\ref{line: query}). It then uses the $\FINDVARS$ function to select $k$ bits in the $j$th block to keep unset (Line~\ref{line: freechoice}). It sets the remaining $m-k$ bits of the $j$th block of $y$ using the $\PropP_g$ function to ensure that the output of the $j$th block is $x_j$ (Line~\ref{line: setvars}). The stifling property of $g$ is crucially used in this step.
            \item The algorithm sets the value of $A[\num]$ (in Line~\ref{line: gotosmallerchild}) such that the simulation algorithm answers the parity query at the current node (Line~\ref{line: setchild}) so as to go to the smaller subtree.
        \end{itemize}
    \end{itemize}
\end{itemize}

When it reaches a leaf it outputs the value at that leaf.

\begin{algorithm}
    \caption{Simulation of a PDT $T$ for $f \circ g$ to obtain a DT for $f$}\label{alg:pdt to dt}
    \begin{algorithmic}[1]
    \Require $x \in \zone^n$
    \State $P_0 \gets \textnormal{root}(T)$ \Comment{Starting the simulation of the PDT}
    \State{$\num \gets 0$}
    \State{$M \gets []$}
    \State{$A \gets []$}
    \State{$\MARK[i] \gets \emptyset\,\forall i \in [n]$}
    \State{$y \gets (*^{m})^{n}$}
    \State{$\FREE[i] \gets \cbra{i} \times [m]\,\forall i \in [n]$}
    \State{$Q \gets \emptyset$}
    \While{$P_\num$ is not a leaf of $T$}\label{line: simulationloop}
    \State{$P',b,j \gets$ \rowreduce($P_{\num}, M, A, \MARK, \FREE$)}\label{line: rowreduce}\Comment{Process the current parity query and return the processed query, a correction bit and the index of the marked block.}
    \State{$M[\num] \gets P'$}\label{line: mnump'}
    \If{$j = \bot$}\label{line: nomark}\Comment{If no block has been marked}
        \State{$A[\num] \gets \val(M[\num],y)$}\label{line: setval}\Comment{By Claim~\ref{claim: processedstructure}, all variables appearing in $M[\num]$ have been set in $y$.}
    \Else\label{line: elsemarked}
        \State{$\MARK[j] \gets \MARK[j] \cup \bra{\num}$}\label{line: markjappend}
        \If{$\abs{\MARK[j]} = k$}\label{line: markandkth}\Comment{If this is the $k$th time the $j$th block is marked}
            \State{Query $x_j$}\label{line: query}
            \State{$Q \gets Q \cup \cbra{j}$}\label{line: updateq}\Comment{Update set of queried variables.}
            \State{$S \gets \FINDVARS(\cbra{M[i]|_j : i \in \MARK[j]})$}\label{line: freechoice}\Comment{Select $k$ bits of block $j$ that will remain unset in $y$.}
            \State{$y|_j \gets \PropP_g(S,x_j)$}\label{line: setvars}\Comment{Set other bits in block $j$ to force the $j$th block to output value~$x_j$. This is possible since~$g$ is $k$-stifled.}
            \State{$\FREE[j] \gets \cbra{j} \times S$}\label{line: update free}
\EndIf
        \State{$b' \gets \argmin_{a} \size(\child(P_{\num},a))$}\label{line: b'choice}
        \State{$A[\num] \gets b'+b$}\label{line: gotosmallerchild}
    \EndIf
    \State{$P_{\num+1} \gets \child(P_\num,A[\num]+b)$}\label{line: setchild}\Comment{If no block was marked (Line~\ref{line: nomark}-\ref{line: setval}), go to the child as dictated by the set variables. Else, go to the child with the smaller subtree, as ensured by Lines~\ref{line: b'choice} and~\ref{line: gotosmallerchild}.}
    \State{$\num \gets \num + 1$}
    \EndWhile
    \State{Output the value output at $P_\num$}
    \end{algorithmic}
\end{algorithm}

\subsection{The \rowreduce~subroutine}\label{sec: rowreduce}

We now discuss the \rowreduce~subroutine, which describes how to process an incoming parity and choose which block it should mark.
Algorithm~\ref{alg:rowreduce} takes as inputs $P$,$M$,$A$,$\MARK$,$\FREE$, each of which are described below.
\begin{itemize}
    \item $P$ is the new parity query to be processed.
    \item The previously processed parity queries are stored in $M$.
    \item The answers given in the simulation to the parities in $M$ are stored in $A$.
    \item $\MARK[i]$ is the set of indices of parity queries that have the $i$th block marked.
    \item $\FREE[i]$ is the set of unset variables in block $i$ in the partial assignment (referred to as $y$ in \Cref{sec: overview simulation}) of the input to $f \circ g$ that the simulation keeps.
\end{itemize}

\begin{algorithm}
\caption{The \rowreduce~routine}\label{alg:rowreduce}
\begin{algorithmic}[1]
\Require $P$,$M$,$A$,$\MARK$,$\FREE$

\State{$\markedindex \gets \bot$}
\State{$P' \gets P$}
\State{$b \gets 0$}
\For{$j$ from $1$ to $n$}\label{line: forloop}
\If{$P'|_{\FREE[j]}$ is in the linear span of $\bra{M[i]|_{\FREE[j]}: i \in \MARK[j]}$}\label{line: iflinspan}
    \State{Let $S \subseteq \MARK[j]$ be such that $P'|_{\FREE[j]} = \sum_{i \in S} M[i]|_{\FREE[j]}$.}
    \State{$P' \gets P' + \sum_{i \in S} M[i]$}\label{line: zeroout}\Comment[.65]{Observe that $P'|_{\FREE[j]} = 0^{\FREE[j]}$ after this step.}
    \State{$b \gets b + \sum_{i \in S} A[i]$}\label{line: zerooutb}
\Else\label{line: else}
    \State{$\markedindex \gets j$}\label{line: setmarkedindex}
    \State{\textbf{break}}\label{line: break}\Comment[.65]{Ensure that no more than 1 block is marked.}
\EndIf
\EndFor
\State{\Return{$P',b,\markedindex$}}
\end{algorithmic}
\end{algorithm}

The processed parity $P'$ is initialized to $P$, and a correction bit $b$ is initialized to $0$. In Line~\ref{line: forloop}, we go through the blocks of variables in order looking for a block to assign the incoming parity to.
The check made on Line~\ref{line: iflinspan} is to see whether this parity can mark the current block $j$. It cannot mark block $j$ if its restriction to $\FREE[j]$ is in the linear span of the restrictions of earlier parities that marked block $j$. We detail what the algorithm does based on this check below.
\begin{itemize}
\item In Lines~\ref{line: iflinspan} to~\ref{line: zerooutb}, we consider the case where the parity being processed, restricted to the $j$th block, is a linear combination of the restrictions of earlier parities that marked the $j$th block. That is, there is a set $S \subseteq \MARK[j]$ such that $P'|_{\FREE[j]} = \sum_{i \in S} M[i]|_{\FREE[j]}$. In Line~\ref{line: zeroout} we update the parity by adding $\sum_{i \in S} M[i]$ to $P'$. This ensures that $P'|_{\FREE[j]} =0$. Then in Line~\ref{line: zerooutb} we update the correction bit $b$ by adding $\sum_{i \in S} A[i]$ to it. This is so that the following holds for all $y$.
    \[ \Big[\val(M[i], y) = A[i]\, \forall i \in [\num]_0\Big] \implies \Big[\val(P, y) = 0 \iff \val(P', y) = b\Big]. \]
    \item In Lines~\ref{line: else} to~\ref{line: break}, we consider the case when the current parity restricted to $\FREE[j]$ is not a linear combination of the restrictions of the earlier parities that marked the $j$th block. In this case we have the current parity mark the $j$th block (Line~\ref{line: setmarkedindex}) and then return $P',b,j$.
\end{itemize}
Notice that a block is marked only in the latter case. If this case does not occur for any of the $[n]$ blocks, then no block is marked and the procedure returns $P',b,\bot$.

\subsection{Correctness}\label{sec: correctness}

We first address the fact that we primarily deal with the values of the processed parities rather than the original parity queries from the PDT. The following claim says that answering the processed parity is in fact equivalent to answering the original parity from the PDT.

\begin{claim}\label{clm: processedequivalence}
    Let $P',b,j$ be the output of \rowreduce~when run with its first three inputs being $P,M$ and $A$.  Let $\num$ be the number of entries in $M$. Then for all $y \in \zone^{mn}$,
    \[ \Big[\val(M[i], y) = A[i]\, \forall i \in [\num]_0\Big] \implies \Big[\val(P, y) = 0 \iff \val(P', y) = b\Big]. \]\end{claim}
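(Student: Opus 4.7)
The plan is to prove the claim by a straightforward induction on the iterations of the \textbf{for} loop in \rowreduce\ (Algorithm~\ref{alg:rowreduce}), maintaining the following invariant: after any number of completed iterations,
\[
\val(P,y) \;=\; \val(P',y) + b \pmod{2}
\qquad \text{for every } y \in \zone^{mn} \text{ with } \val(M[i],y) = A[i]\ \forall i \in [\num]_0.
\]
This invariant is exactly equivalent to the conclusion $\big[\val(P,y)=0 \iff \val(P',y)=b\big]$ that we want, so proving it for the final values of $P'$ and $b$ suffices.

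For the base case, note that just before the loop begins we have $P'=P$ and $b=0$, so $\val(P',y)+b = \val(P,y)$ trivially. For the inductive step, the only way the pair $(P',b)$ changes inside the loop is on Lines~\ref{line: zeroout}--\ref{line: zerooutb}, where for some set $S \subseteq \MARK[j]$ we replace $P'$ by $P'' := P' + \sum_{i \in S} M[i]$ and $b$ by $b'' := b + \sum_{i \in S} A[i]$. Using bilinearity of the inner product (so that $\val(\cdot,y)$ is $\F_2$-linear in its first argument),
\[
\val(P'',y) + b'' \;=\; \val(P',y) + \sum_{i \in S}\val(M[i],y) + b + \sum_{i \in S} A[i].
\]
By the hypothesis on $y$ we have $\val(M[i],y) = A[i]$ for every $i \in S \subseteq [\num]_0$, so the two sums cancel modulo 2 and $\val(P'',y) + b'' = \val(P',y) + b$, which by the inductive hypothesis equals $\val(P,y)$. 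Thus the invariant is preserved.

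The loop can only terminate in two ways: by reaching $j=n$ without ever entering the \textbf{else} branch, or by breaking out at Line~\ref{line: break}. In either case the returned $(P',b)$ are precisely the values for which we have just verified the invariant (note that updates to $\markedindex$ play no role in the statement being proved, so the branching between the two termination modes is irrelevant here). This finishes the argument.

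I do not anticipate any serious obstacle: the claim is really a statement about $\F_2$-linear bookkeeping, and the only thing to be careful about is that the set $S$ chosen on Line~\ref{line: iflinspan} is a subset of $\MARK[j] \subseteq [\num]_0$, so that the hypothesis $\val(M[i],y)=A[i]$ is available for every index appearing in the cancellation. No properties of $g$, of stifling, or of the choice of $\FREE$ are needed for this particular claim — those come into play only when proving the correctness of the outer simulation (Algorithm~\ref{alg:pdt to dt}).
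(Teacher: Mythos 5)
Your proof is correct and takes essentially the same approach as the paper: the paper observes in one shot that the returned $P'$ equals $P$ plus a sum of previously stored parities $M[i]$ with $b$ the corresponding sum of $A[i]$'s, and concludes by linearity of $\val(\cdot,y)$, whereas you package exactly this observation as a loop invariant verified iteration by iteration. The extra care you take (that each correction set $S$ lies in $\MARK[j]\subseteq[\num]_0$, so the hypothesis applies to every cancelled term) is the right detail to check and matches the paper's reasoning.
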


\begin{proof}
    The output $P'$ is computed from the input $P$ only through modifications by Line~\ref{line: zeroout} of Algorithm~\ref{alg:rowreduce}. That is, there is some subset $S \subseteq [\num]_0$ such that $P' = P + \sum_{i \in S} M[i]$. The output $b$ is also modified alongside the modifications of $P'$ so that $b = \sum_{i \in S} A[i]$.
    
    Hence for any input $y$, $\val(P',y) = \val(P,y) + \sum_{i \in S} \val(M[i],y)$. The claim immediately follows.
\end{proof}

As a corollary, we get that analyzing inputs that answer all the processed parities according to the answer array $A$ is the same as analyzing inputs that follow the path taken by the PDT simulation.

\begin{corollary}\label{cor: processedsimulation}
    The following statement is a loop invariant for the while loop in Line~\ref{line: simulationloop} of Algorithm~\ref{alg:pdt to dt}. For all $y \in \zone^{mn}$,
    \[ \Big[\val(M[i], y) = A[i]\, \forall i \in [\num]_0\Big] \iff \Big[y \text{ reaches }P_{\num}\Big]. \]
\end{corollary}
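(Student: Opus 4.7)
The plan is to prove the invariant by induction on the loop counter $\num$, using Claim~\ref{clm: processedequivalence} as the key workhorse.

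For the base case, before the first iteration we have $\num = 0$, and both $M$ and $A$ are empty. The left-hand side is then a universally quantified statement over an empty index set, and is vacuously true for every $y$. On the right-hand side, $P_0 = \textnormal{root}(T)$, so trivially every $y$ reaches $P_\num$. Hence both sides hold for all $y$.

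For the inductive step, suppose the invariant holds at the beginning of an iteration with counter $\num$; we establish it for counter $\num + 1$ at the end of that iteration. The iteration modifies $M$ and $A$ only by appending new entries $M[\num] := P'$ and $A[\num]$ (computed in one of Lines~\ref{line: setval} or~\ref{line: gotosmallerchild} depending on whether a block was marked), and sets $P_{\num + 1} := \child(P_\num, A[\num] + b)$, where $P',b,j$ are the outputs of \rowreduce. By the definition of $\child$, a string $y$ reaches $P_{\num+1}$ if and only if $y$ reaches $P_\num$ and $\val(P_\num, y) = A[\num] + b$. By the inductive hypothesis, $y$ reaches $P_\num$ is equivalent to $\val(M[i], y) = A[i]$ for all $i \in [\num]_0$. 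Under this hypothesis on $y$, Claim~\ref{clm: processedequivalence} (applied to the \rowreduce~call that produced $P',b$) gives $\val(P_\num, y) = 0 \iff \val(P', y) = b$, and hence more generally $\val(P_\num, y) = A[\num] + b \iff \val(P', y) = A[\num]$, i.e., $\val(M[\num], y) = A[\num]$. Combining these, $y$ reaches $P_{\num+1}$ iff $\val(M[i], y) = A[i]$ for all $i \in [\num+1]_0$, which is exactly the invariant at the end of the iteration.

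This completes the induction and establishes the corollary. The argument is essentially bookkeeping: Claim~\ref{clm: processedequivalence} guarantees that each replacement of an original parity query $P_\num$ by its processed form $P'$ preserves the equivalence between ``answered consistently so far'' and ``reached the current PDT node'', provided the correction bit $b$ is absorbed into the answer $A[\num]$, which is exactly how Line~\ref{line: setchild} computes the next node. I do not anticipate a real obstacle here; the only thing to be careful about is that $A[\num]$ is set differently in the two branches (Line~\ref{line: setval} versus Line~\ref{line: gotosmallerchild}), but the argument above uses only the fact that $P_{\num+1} = \child(P_\num, A[\num] + b)$, which holds in both branches by Line~\ref{line: setchild}.
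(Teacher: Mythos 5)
Your proof is correct and follows essentially the same route as the paper's: induction on the loop counter, with Claim~\ref{clm: processedequivalence} supplying the equivalence $\val(M[\num],y)=\val(P_{\num},y)+b$ for inputs reaching $P_{\num}$, combined with the definition of $P_{\num+1}$ in Line~\ref{line: setchild}. No issues.
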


\begin{proof}
    The loop is initially entered with the array $M$ being empty and $P_{\num}$ being the root of the PDT. Clearly the statement holds at this point. We now prove by induction that it is a loop invariant.
    
    Suppose the statement holds at the beginning of an execution of the loop. $M[\num]$ is populated in this execution by the parity output by the \rowreduce~subroutine in Line~\ref{line: rowreduce}. By Claim~\ref{clm: processedequivalence} and the fact that the statement held at the beginning of the execution, every input $y$ that reaches the node $P_{\num}$ satisfies $\val(M[\num],y) = \val(P_{\num},y) + b$. In Line~\ref{line: setchild} we go to the node $P_{\num+1} = \child(P_{\num},A[\num]+b)$. Hence the inputs that reach $P_{\num+1}$ are exactly those that reach $P_{\num}$ and satisfy $M[\num] = A[\num]$. We then increment $\num$ and the loop ends, so the statement holds at the end of an execution of the loop.
\end{proof}

We now show that the processed parities are very structured.

\begin{claim}\label{claim: processedstructure}
Let $\num$ be the number of queries made by the simulated PDT at a certain point in the simulation. Recall that the processed parities are stored in $M[0]$ to $M[\num-1]$.
\begin{enumerate}
    \item Let $i \in [\num]_0$. If the parity $M[i]$ marks block $j$ (i.e., $i \in \MARK[j]$), then for every block $j_1<j$,
    \[ M[i] \cap \FREE[j_1] = \emptyset \]
    where $M[i]$ is viewed as the set of indices $\cbra{\alpha \in [m] \times [n] : M[i]_{\alpha} = 1}$.\\
    If $M[i]$ does not mark any block, then the above holds for all $j_1 \in [n]$.
    \item For every block $j \in [n]$, the parities $\bra{M[i]|_{\FREE[j]} : i \in \MARK[j]}$ are linearly independent.
\end{enumerate}
\end{claim}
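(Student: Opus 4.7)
The plan is to prove both parts by a joint induction on the number of processed parities $\num$, exploiting the fact that $\FREE[j]$ is only modified in Line~\ref{line: update free} of Algorithm~\ref{alg:pdt to dt} (not inside \rowreduce) and that this modification can only shrink $\FREE[j]$. In particular, during any single call to \rowreduce, the sets $\FREE[j_1]$ are all fixed.

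For part 1, I would analyze a single invocation of \rowreduce on an incoming parity $P$ and track $P'$ through the for-loop over blocks $j_1 = 1, \dots, n$. At each iteration either (a) $P'|_{\FREE[j_1]}$ lies in the span of $\{M[i]|_{\FREE[j_1]}: i \in \MARK[j_1]\}$ and is zeroed out in Line~\ref{line: zeroout}, or (b) block $j_1$ is marked and the loop breaks. If the resulting $M[\num]$ marks block $j$, then case (a) occurred at every $j_1 < j$, so $P'|_{\FREE[j_1]} = 0$ at the end of iteration $j_1$. The only worry is that a later iteration $j' \in (j_1, j]$ could reintroduce nonzero entries in $\FREE[j_1]$ when we add $\sum_{i \in S} M[i]$ for some $S \subseteq \MARK[j']$; but by the inductive hypothesis applied to each such $M[i]$ (which marks block $j' > j_1$), $M[i] \cap \FREE[j_1] = \emptyset$ at the time of that earlier processing, and since $\FREE[j_1]$ has only shrunk since then, the disjointness persists. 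Hence the additions do not affect $P'|_{\FREE[j_1]}$, and we conclude $M[\num] \cap \FREE[j_1] = \emptyset$. The same argument with all $j_1 \in [n]$ handles the case where $M[\num]$ marks no block. Finally, since $\FREE[j_1]$ only ever shrinks, this disjointness is preserved at all future points in the simulation.

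For part 2, I would again induct on $\num$. The base case is trivial since $\MARK[j] = \emptyset$ initially. For the inductive step, $\MARK[j]$ is modified in Line~\ref{line: markjappend} precisely when \rowreduce enters the else branch (Line~\ref{line: else}) at block $j$ with $P'$ equal to the returned $M[\num]$. The entry into that branch is conditioned on $P'|_{\FREE[j]}$ \emph{not} lying in the span of $\{M[i]|_{\FREE[j]} : i \in \MARK[j]\}$; combined with the inductive hypothesis (linear independence of the old family), this yields linear independence of the new family $\{M[i]|_{\FREE[j]} : i \in \MARK[j] \cup \{\num\}\}$. The subtlety arises in the sub-case $|\MARK[j]| = k$ triggering Line~\ref{line: update free}, which replaces $\FREE[j]$ by $\{j\} \times S$ where $S = \FINDVARS(\{M[i]|_j : i \in \MARK[j]\})$. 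By the defining property of \FINDVARS, the restrictions $\{M[i]|_S\}_{i \in \MARK[j]}$ are linearly independent, so the invariant survives the shrinkage. Moreover, once $|\MARK[j]| = k$, the span of $\{M[i]|_{\FREE[j]}\}_{i \in \MARK[j]}$ is all of $\F_2^{\FREE[j]}$, so the else branch is never triggered again at block $j$ and $\MARK[j]$ is not modified further; this gives the bonus fact that $|\MARK[j]| \leq k$ always.

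The main obstacle is the dual dynamics: $M[\cdot]$ is write-once while $\FREE[\cdot]$ shrinks over time, so one must verify that every invariant proved at the moment of processing a parity continues to hold after later $\FREE$-shrinkages. The resolution is that both updates to $\FREE$ are benign for the invariants: monotone shrinking preserves disjointness (for part 1), and the careful choice made by \FINDVARS is exactly what preserves linear independence (for part 2).
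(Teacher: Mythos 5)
Your proposal is correct and follows essentially the same route as the paper's proof: an induction over the processed parities combined with tracking $P'$ through the for-loop of \rowreduce{} for part 1 (the paper phrases this as a nested induction on $j_1$), and the if-condition of Line~\ref{line: iflinspan} plus the defining property of $\FINDVARS$ for part 2. Your explicit handling of the persistence of the invariants under the monotone shrinkage of $\FREE[\cdot]$ is a point the paper leaves implicit, but it is the same argument.
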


\begin{proof}
    We prove the two properties in order.
    \begin{enumerate}
        \item We prove the first property by induction on $i$. 
        \begin{itemize}
            \item Let's assume that the property holds for all $i'<i$. Note that $M[i]$ is the processed parity output by \rowreduce~when it was called to process $P_i$. Similarly $j$ is the block marked by \rowreduce~during the same call. During the processing, the algorithm ran through all blocks $j_1 < j$ in order (or $j_1 \leq n$ if no block was marked) and took the branch of Line~\ref{line: iflinspan} of Algorithm~\ref{alg:rowreduce} in each. To prove that $M[i] \cap \FREE[j_1] = \emptyset$ for each $j_1 < j$, we use a second induction on $j_1$.
            \begin{itemize}
                \item When the algorithm looks at block $j_1$ (i.e., in the $j_1$th iteration of Line~\ref{line: forloop} in Algorithm~\ref{alg:rowreduce}), we assume that the partially processed parity $P'$ has, for each $j_2<j_1$, $P' \cap \FREE[j_2] = \emptyset$. When looking at block $j_1$, the algorithm may process $P'$ further by adding to it some previously processed parities that had marked block $j_1$ (Line~\ref{line: zeroout}). But any previously processed parity would be stored in $M[i']$ for some $i'<i$. By our first induction hypothesis we know that any such $M[i']$ that had marked block $j_1$ has, for each $j_2<j_1$, $M[i'] \cap \FREE[j_2] = \emptyset$. Hence even after adding such parities to $P'$ it still holds that for each $j_2<j_1$, $P' \cap \FREE[j_2] = \emptyset$. The added parities explicitly ensure that $P' \cap \FREE[j_1] = \emptyset$ and so for each $j_2 \leq j_1$, $P' \cap \FREE[j_2] = \emptyset$. This completes the second induction.
            \end{itemize}
            Since this second induction proves the required statement for all $j_1<j$, this completes the first induction as well.
        \end{itemize}
        \item For the second property we note that if a processed parity $P'$ marks a block $j$ (Line~\ref{line: setmarkedindex} of Algorithm~\ref{alg:rowreduce}), then $P'|_{\FREE[j]}$ is not in the linear span of $P''|_{\FREE[j]}$ for previous parities $P''$ that have block $j$ marked (since the \textbf{if} condition Line~\ref{line: iflinspan} was not satisfied). Hence the set of parities that mark a block are necessarily linearly independent. Moreover, when $\FREE[j]$ is modified (see Line~\ref{line: freechoice} of Algorithm~\ref{alg:pdt to dt}), it is chosen to be the output of the function $\FINDVARS$ on input $\cbra{M[i]|_j : i \in \MARK[j]}$. By the definition of $\FINDVARS$, this ensures that $\cbra{M[i]|_{\FREE[j]} : i \in \MARK[j]}$ remains linearly independent.
    \end{enumerate}
\end{proof}

We finally prove the correctness of the simulation.

\begin{claim}\label{claim: correctness}
For any string $w \in \zone^n$ consistent with the queried bits of $x$ there is an input $y$ to $f \circ g$ that reaches the leaf $P_{\num}$ and satisfies $g^n(y) = w$.
\end{claim}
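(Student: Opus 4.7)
The plan is to construct $y$ explicitly as a completion of the partial assignment maintained by Algorithm~\ref{alg:pdt to dt}, by filling in the FREE positions one block at a time in \emph{reverse} block order $j = n, n-1, \ldots, 1$. At each step the goal is to simultaneously arrange (i) $g(y|_j) = w_j$ and (ii) $\val(M[i], y) = A[i]$ for every $i \in \MARK[j]$. Once (i) holds for every $j$ we have $g^n(y) = w$, and once (ii) holds for every marking parity we have $\val(M[i], y) = A[i]$ for every $i \in [\num]_0$: any $M[i]$ that marks no block satisfies $M[i] \cap \FREE[j'] = \emptyset$ for all $j'$ by Part~1 of Claim~\ref{claim: processedstructure}, so $\val(M[i], y)$ is determined entirely by the non-FREE bits already assigned in Algorithm~\ref{alg:pdt to dt} and equals the value recorded in Line~\ref{line: setval}. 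Corollary~\ref{cor: processedsimulation} then gives that $y$ reaches $P_\num$, proving the claim.

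To arrange (i), I would split on $|\MARK[j]|$. When $|\MARK[j]| = k$, the $m - k$ bits outside $\FREE[j]$ were already fixed by $\PropP_g(\FREE[j], x_j)$, so the stifling property forces $g(y|_j) = x_j$ regardless of how $\FREE[j]$ is later filled; since block $j$ was queried at that moment we have $j \in Q$, and hence $w_j = x_j$ by consistency of $w$ with the queried bits of $x$. When $|\MARK[j]| = \ell < k$, all $m$ bits of the block are still free, so I would apply $\FINDVARS$ to $\{M[i]|_j : i \in \MARK[j]\}$ to obtain $\ell$ coordinates $T \subseteq [m]$ on which these parities remain linearly independent, and then invoke $\PropP_g(T, w_j)$ (valid because $|T| = \ell \leq k$) to fix the remaining $m - \ell$ bits so that $g(y|_j) = w_j$ no matter what is later placed on $T$.

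For (ii), after (i) is arranged there are exactly $|\MARK[j]|$ unassigned bits left in block $j$ (either $\FREE[j]$ in the first case or $T$ in the second), and I must set them so that every marking $M[i]$ evaluates overall to $A[i]$. Here the reverse ordering is crucial: the FREE support of $M[i]$ in blocks $> j$ is already set, Part~1 of Claim~\ref{claim: processedstructure} forces $M[i]$ to have no FREE support in blocks $< j$, and the non-FREE positions were fixed during the simulation itself. Thus each constraint $\val(M[i], y) = A[i]$ reduces to a linear equation over $\F_2$ in the $|\MARK[j]|$ still-unassigned bits of block $j$ with a fully determined right-hand side, and Part~2 of Claim~\ref{claim: processedstructure} (respectively the defining property of $\FINDVARS$ in the second case) ensures that the coefficient matrix of this $|\MARK[j]| \times |\MARK[j]|$ system is invertible over $\F_2$, so a unique solution exists. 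The main subtlety is the interplay between (i) and (ii), which is resolved cleanly by the reverse-order processing together with the two parts of Claim~\ref{claim: processedstructure}; beyond that the argument is elementary linear algebra.
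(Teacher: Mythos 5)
Your proposal is correct and is essentially the paper's own argument: the same two-stage construction (stifling via $\PropP_g$ and $\FINDVARS$ to force $g(y|_j)=w_j$, then a linear system over the remaining unset bits whose solvability rests on Parts 1 and 2 of Claim~\ref{claim: processedstructure}), with \Cref{cor: processedsimulation} converting ``answers $A$'' into ``reaches $P_{\num}$''. The only difference is presentational: the paper exhibits the full coefficient matrix $\mathcal{A}$ and argues it has independent rows from its block upper-triangular structure, whereas you solve the same block upper-triangular system by back-substitution in reverse block order.
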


\begin{proof}

Fix a string $w \in \zone^n$ consistent with the queried bits of $x$. We start with the partial assignment $y$ maintained by the PDT simulation and complete it.

\begin{itemize}
    \item \textbf{Fixing variables in $y$ to ensure $g^n(y) = w$}: 
The bits of $y$ that are already fixed ensure that for any queried bit $x_j$, $g(y|_j) = w_j$ (these are set in Line~\ref{line: setvars} of Algorithm~\ref{alg:pdt to dt}).

Now for each unqueried bit $x_j$, first note that $\ell := \abs{\MARK[j]} < k$ (otherwise Line~\ref{line: query} of Algorithm~\ref{alg:pdt to dt} ensures that $x_i$ is queried). We do the following:
\begin{itemize}
    \item Get a set $S = \FINDVARS(\cbra{M[i]|_j}_{i \in \MARK[j]})$ of $\ell$ indices such that $\cbra{M[i]|_S}_{i \in \MARK[j]}$ are still linearly independent.
    \item Assign $y|_j$ to be $\PropP_g(S,w_j)$ to ensure that all the bits of $y|_j$ indexed in $S$ are unset, those outside are set, and all completions of $y|_j$ evaluate to $w_j$ when $g$ is applied to them. This is possible since $g$ is $k$-stifled and $\ell < k$.\footnote{It suffices to assume $\ell \leq k$ here. See \Cref{rmk: improved simulation} for a discussion on how we can modify our simulation to exploit this.} We will continue to use $\FREE[j]$ to refer to the set $S$.
\end{itemize}

Now we are guaranteed that any extension of $y$ with bits set as above will yield the string $w$ when $g^n$ is applied to it.

Before we move on to ensuring that $y$ reaches the leaf $P_{\num}$, let us make the useful observation that at this point $|\MARK[j]| = |\FREE[j]|$ for all $j \in [n]$. We have ensured this for all $j$ that were unqueried by the simulation. For those that were queried, note that the simulation only queries $x_j$ when $|\MARK[j]|=k$, and on querying $x_j$, $\FREE[j]$ is modified to have size $k$ (Lines~\ref{line: markandkth} and~\ref{line: update free} of Algorithm~\ref{alg:pdt to dt}). By Claim~\ref{claim: processedstructure}, we know that $\bra{M[i]|_{\FREE[j]} : i \in \MARK[j]}$ are linearly independent parities. Hence in Algorithm~\ref{alg:rowreduce}, when processing block $j$, Line~\ref{line: iflinspan} will always fire and block $j$ is never marked again. This ensures that $\MARK[j]$ and $\FREE[j]$ do not get modified again.

\item \textbf{Extending $y$ to guarantee that it reaches the leaf $P_{\num}$}:

By \Cref{cor: processedsimulation} the inputs $y$ that reach the leaf $P_{\num}$ are exactly those that answer $A[i]$ to each processed parity query $M[i]$.

We analyze each parity query in $M$. There are two cases: one where the current parity query under consideration marks a block, and the other where the parity query under consideration does not mark a block.

\begin{itemize}
    \item \textbf{When $i \notin \bigcup_j \MARK[j]$:} By Claim~\ref{claim: processedstructure} the only non-zero entries of $M[i]$ are variables that have previously been set in $y$ (in Line~\ref{line: setvars} of Algorithm~\ref{alg:pdt to dt}). Hence the parity query $M[i]$ has to evaluate to $\val(M[i],y)$ when queried on any extension of $y$, and indeed $A[i]$ is set to $\val(M[i],y)$ in Line~\ref{line: setval} of Algorithm~\ref{alg:pdt to dt}. 

    \item \textbf{When $i \in \bigcup_j \MARK[j]$:}
Let $I = \bigcup_j \MARK[j]$. Note that $|I| = \sum_j \abs{\MARK[j]}$ since at most one block gets marked for each parity. Since $\abs{\FREE[j]} = \abs{\MARK[j]}$ for all $j \in [n]$, there are still $\abs{I}$ unset bits of $y$. Let us refer to these variables by $\unset(y)$ and their complement by $\set(y)$.

\newcommand{\mM}{\mathcal{M}}
\newcommand{\mA}{\mathcal{A}}
\newcommand{\mb}{b}
Let us now construct a matrix $\mM \in \F_2^{I \times [mn]}$ where the rows are indexed by $I$ and row $i$ is the vector $M[i]$. We also consider the column vector $\mb \in \F_2^{I}$ where the rows are indexed by $I$ and the entry $\mb_i = A[i]$.

Our goal is to find a column vector $v \in \F_2^{[mn]}$ such that $v_{\set(y)} = y_{\set(y)}$ and $\mM v = \mb$. Setting the unset bits of $y$ according to $v$ would give us what we want, a completion of $y$ that answers $A[i]$ to each parity query $M[i]$. To this end let us write the matrix $\mM$ as follows (after an appropriate permutation of columns).
\[
\mM = \begin{bmatrix}
\mA &|  & \mA'
\end{bmatrix},
\]
where the first set of columns corresponds to indices in $\unset(y)$, and the second set of columns corresponds to indices in $\set(y)$.

Let $b' = \mA'v_{\set(y)}$. If we find a setting of $v_{\unset(y)}$ such that $\mA v_{\unset(y)} = \mb+b'$, then $\mM v = \mb$ and this would conclude the proof. We prove the existence of such a setting by noting below that the rows of $\mA$ are independent and hence its image is $\F_2^{I}$. To prove the independence of the rows of $\mA$, view $\mA$ as a block matrix made up of submatrices $\bra{\mA|_{i,j}}_{i,j \in [n]}$ with $\mA|_{i,j}$ containing the rows $\MARK[i]$ and columns $\FREE[j]$ (see Figure~\ref{fig: matrix A}). Note that submatrices of the form $\mA|_{i,i}$ are square matrices since $\abs{\MARK[i]} = \abs{\FREE[i]}$ for all $i \in [n]$.

\begin{figure}
    \centering
    \begin{tikzpicture}
        \foreach \i in {1,2,4}
        \foreach \j in {1,2,4}
        {
            \node at (1.5*\i,-1.5*\j) {$\mA|_{\ifthenelse{\j=4}{n}{\j},\ifthenelse{\i=4}{n}{\i}}$};
        }
        \foreach \i in {1,2,4}
        {
            \node at (1.5*3,-1.5*\i) {$\cdots$};
            \node at (-0.5,-1.5*\i) {$\MARK[\ifthenelse{\i=4}{n}{\i}]$};
            \draw [decorate,decoration = {calligraphic brace}] (0.4,-1.5*\i-0.7) --  (0.4,-1.5*\i+0.7);
    
            \node at (1.5*\i,-1.5*3) {$\vdots$};
            \node at (1.5*\i,-0.1) {$\FREE[\ifthenelse{\i=4}{n}{\i}]$};
            \draw [decorate,decoration = {calligraphic brace}] (1.5*\i-0.7,-0.6) --  (1.5*\i+0.7,-0.6);
        }
        \node at (1.5*3,-1.5*3) {$\ddots$};
        \draw[thick] (0.75,-0.75) -- (0.6,-0.75) -- (0.6,-6.75) -- (0.75,-6.75);
        \draw[thick] (6.75,-0.75) -- (6.9,-0.75) -- (6.9,-6.75) -- (6.75,-6.75);
    \end{tikzpicture}
    \caption{The structure of the matrix $\mA$. $\mA|_{i,i}$ is a linearly independent submatrix for each $i \in [n]$, and
    $\mA|_{i,j}$ is the all-$0$ matrix when $i>j$.}
    \label{fig: matrix A}
\end{figure}

By \Cref{claim: processedstructure} and the way we defined $\FREE[i]$ using the $\FINDVARS$ function (ensuring that the restriction of parities in $\MARK[i]$ to $\FREE[i]$ preserves independence),
\begin{itemize}
    \item $\mA|_{i,i}$ is a linearly independent submatrix for each $i \in [n]$, and
    \item $\mA|_{i,j}$ is the all-$0$ matrix when $i>j$.
\end{itemize}
The independence of the rows of $A$ easily follows from this `upper triangular'-like structure.
\end{itemize}
\end{itemize}
\end{proof}

We require the following observation for the proof of \Cref{thm: sdt depth lifting} since we work with a slightly modified simulation algorithm there.
\begin{observation}\label{obs: markedanswersirrelevant}
    The correctness analysis of Algorithm~\ref{alg:pdt to dt} described in this section still holds if we set $A[\num]$ to $b_{\num} \in \zone$ in Line~\ref{line: gotosmallerchild}, where $b_{\num}$ is an arbitrary function of the history of the simulation.
\end{observation}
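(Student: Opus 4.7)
The plan is to inspect the proof of Claim~\ref{claim: correctness} and locate precisely where the value of $A[\num]$ enters the argument. There are exactly two places: (i) when $j=\bot$, in which case $A[\num]$ is forced to equal $\val(M[\num],y)$ because by Claim~\ref{claim: processedstructure} all variables of $M[\num]$ lie in already-set coordinates of $y$; and (ii) when a block is marked, in which case $A[\num]$ appears only as an entry of the right-hand side vector $\mathcal{b}$ of the linear system $\mathcal{A}\, v_{\unset(y)} = \mathcal{b} + b'$ that must be solved to extend $y$ to reach the leaf $P_{\num}$.

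Case (i) is not touched by the modification in the observation, since the observation only concerns the assignment on Line~\ref{line: gotosmallerchild}, which lives in the \textbf{else} branch where a block is marked. So only case (ii) requires re-examination. The key point I would emphasize is that the argument for solvability of $\mathcal{A}\, v_{\unset(y)} = \mathcal{b} + b'$ does not depend on the specific values in $\mathcal{b}$: the proof shows that the rows of $\mathcal{A}$ are linearly independent via its block upper-triangular structure with square, full-rank diagonal blocks $\mathcal{A}|_{i,i}$ (using \Cref{claim: processedstructure} and the guarantee from $\FINDVARS$). Consequently the image of $\mathcal{A}$ is all of $\F_2^{I}$, so the system is solvable for \emph{every} right-hand side.

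Therefore replacing Line~\ref{line: gotosmallerchild} by $A[\num] \gets b_{\num}$ for any bit $b_{\num}$ that is a function of the simulation history simply changes the entry $\mathcal{b}_{\num}$ of the right-hand side to some other bit, and the existence of a suitable $v_{\unset(y)}$ is unaffected. All other invariants in the proof—\Cref{cor: processedsimulation}, \Cref{claim: processedstructure}, and the ``$\abs{\MARK[j]} = \abs{\FREE[j]}$'' bookkeeping—make no reference to the specific answers recorded for marked-block parities, so they continue to hold verbatim.

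There is no substantive obstacle here; the observation is essentially a bookkeeping remark. The only thing to be slightly careful about is making sure that $b_{\num}$ may indeed depend on the history (including the content of $M$, $A[0],\dots,A[\num-1]$, and $P_{\num}$) without creating a circularity: since $b_{\num}$ is fixed \emph{before} we attempt to solve the system and is independent of the yet-to-be-chosen completion of $y$, it simply becomes a fixed entry of $\mathcal{b}$, and the solvability argument goes through unchanged.
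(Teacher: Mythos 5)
Your proposal is correct and matches the reasoning the paper leaves implicit (the observation is stated without proof): the only place the values $A[i]$ for marked parities enter the correctness argument is as the right-hand side $\mathcal{b}$ of the linear system in Claim~\ref{claim: correctness}, whose solvability depends only on the row-independence of $\mathcal{A}$ and hence holds for every right-hand side, while Corollary~\ref{cor: processedsimulation}, Claim~\ref{claim: processedstructure} and the $\abs{\MARK[j]}=\abs{\FREE[j]}$ bookkeeping are indeed independent of those values. Your identification of the two cases and the non-circularity remark about $b_{\num}$ are exactly the right points to check.
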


Finally we observe that the number of queries made by the simulation is at most $d/k$.
\begin{observation}\label{obs: cost}
    Algorithm~\ref{alg:pdt to dt} makes at most $d/k$ queries to $x$.
\end{observation}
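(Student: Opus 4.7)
The plan is a straightforward double-counting argument on marks versus parity queries. First, I would observe that each iteration of the \textbf{while} loop at Line~\ref{line: simulationloop} processes exactly one parity query encountered along the simulated root-to-leaf path in the PDT $T$. Since $T$ has depth $d$, this loop executes at most $d$ times, so the subroutine \rowreduce{} is invoked at most $d$ times during an execution of Algorithm~\ref{alg:pdt to dt}.

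Next I would observe that each call to \rowreduce{} contributes at most one marking. This is immediate from Algorithm~\ref{alg:rowreduce}: the only place $\markedindex$ is set to a value in $[n]$ is Line~\ref{line: setmarkedindex}, which is immediately followed by the \textbf{break} in Line~\ref{line: break} exiting the \textbf{for} loop. Consequently, after all iterations of the outer \textbf{while} loop have completed, the total number of markings satisfies
\[
\sum_{j \in [n]} |\MARK[j]| \;\leq\; d.
\]

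Finally, the algorithm only queries $x_j$ (Line~\ref{line: query}) inside the conditional of Line~\ref{line: markandkth}, i.e., at the moment $|\MARK[j]|$ first reaches $k$. Moreover, once $|\MARK[j]| = k$, the argument already recorded in the proof of Claim~\ref{claim: correctness} (combined with Claim~\ref{claim: processedstructure}(2)) shows that the condition on Line~\ref{line: iflinspan} of Algorithm~\ref{alg:rowreduce} always fires for block $j$ thereafter, so $\MARK[j]$ never grows again and block $j$ is queried at most once. Hence each query to $x$ consumes exactly $k$ distinct markings, and the total number of queries is at most $\sum_{j \in [n]} |\MARK[j]|/k \leq d/k$.

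There is no real obstacle here; the only subtlety worth emphasizing is why a block cannot be queried twice, which relies on the invariant that once $\abs{\MARK[j]} = k$ the restrictions $\bra{M[i]|_{\FREE[j]}: i \in \MARK[j]}$ form a basis of $\F_2^{\FREE[j]}$, so every future processed parity satisfies the linear-span condition of Line~\ref{line: iflinspan} at block $j$ and passes through without marking it again.
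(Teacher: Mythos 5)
Your proof is correct and follows essentially the same route as the paper's: bound the number of marks by the number of parity queries (at most $d$, since each call to \rowreduce{} marks at most one block thanks to the \textbf{break}), and note that a variable $x_j$ is queried only at the moment $\abs{\MARK[j]}$ reaches $k$, so queries cost $k$ marks apiece. The extra justification you give for why a block is never queried twice is a true invariant but is not even needed here, since the exact-equality test $\abs{\MARK[j]} = k$ can fire at most once as $\abs{\MARK[j]}$ increases by one per mark.
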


\begin{proof}
    First observe that every parity in the simulation marks at most one block.
    The query $x_j$ is made by the simulation (Line~\ref{line: query}) only when $\MARK[j]$ gets $k$ elements in it (Line~\ref{line: markandkth}). So the number of queries made is at most $\abs{\bigcup_{j \in [n]} \MARK[j]}/k \leq d/k$, since the number of iterations in the simulation (captured by $\num$) is at most the depth of the PDT, which is $d$ by assumption.
\end{proof}

It follows from the correctness of \Cref{alg:pdt to dt} and the observation above that $\PDT(f \circ g) \geq \DT(f) \cdot k$ (proving \Cref{thm: pdt depth lifting}). We observe below that our simulation can be modified so as to give a stronger bound of $\PDT(f \circ g) \geq \DT(f) \cdot k + 1$. Note that this bound is tight since $\PDT(\oplus_n \circ \IND_1) \leq n+1$, for instance.
\begin{remark}\label{rmk: improved simulation}
Note that we query $x_j$ in Algorithm~\ref{alg:pdt to dt} when the $j$th block is marked for the $k$th time. The reason behind this is that we can handle a block being marked $k$ times while still having the freedom to set its output by the stifling property of $g$. Moreover this might not be true after the block is marked for the $(k+1)$th time. However querying $x_j$ in \emph{the same iteration} where the $j$th block is marked for the $k$th time (Lines~\ref{line: markandkth} and~\ref{line: query}) is premature. We can modify our algorithm so as to query $x_j$ after block $j$ is marked $k$ times \emph{and} a new parity \emph{attempts to} mark it for the $(k+1)$th time. With this modification, once we query $x_j$ and set variables so that its output is fixed, the parity can no longer mark block $j$ and needs to be reprocessed using Algorithm~\ref{alg:rowreduce} with the updated value of $\FREE[j]$ so that it either triggers a query in a later block (in which case we repeat this process), marks a later block, or marks no block at all. If the set of queries to $x$ is denoted by $Q$, then the number of parity queries being made by the simulation is at least $|Q|k + 1$: When the last of the queries in $Q$ is made, say $x_j$, then the $j$th block and all blocks corresponding to previously queried variables had already been marked $k$ times \emph{before} the parity query appeared that triggered the query of $x_j$. Hence there were at least $|Q|k$ parity queries processed \emph{before} the parity query that resulted in the simulation querying $x_j$. As a result we get 
\[
\PDT(f \circ g) \geq \DT(f) \cdot k + 1.
\]
\end{remark}

\subsection{Simulation theorems}\label{sec: all simulation theorems}
In this section we conclude the proofs of \Cref{thm: pdt depth lifting}, \Cref{thm: sdt depth lifting}, \Cref{thm: pdt size lifting}, \Cref{cor: indexing lifting} and \Cref{cor: ip lifting}.

\begin{proof}[Proof of \Cref{thm: pdt depth lifting}]
    It follows from the correctness of Algorithm~\ref{alg:pdt to dt} described in the previous section, and Observation~\ref{obs: cost}.
\end{proof}

\begin{proof}[Proof of \Cref{thm: pdt size lifting}]
Consider a PDT $T_P$ for $f \circ g$ of size $s$. Suppose the DT $T$ we get for $f$ from the simulation makes $d$ queries on a worst-case input, say on $x \in \zone^n$.
Note that whenever a parity marks a block in the simulation algorithm (i.e., Line~\ref{line: elsemarked} fires), the algorithm then chooses to go to the child with the smaller subtree (Lines~\ref{line: gotosmallerchild} and~\ref{line: setchild}). So in Line~\ref{line: setchild}, the size of the subtree rooted at $P_{\num+1}$ is at most half the size of the subtree rooted at $P_{\num}$.
When the simulation algorithm is run on input $x$, it reaches a leaf of $T_P$, and at least $dk$ parities are such that they mark some block in the process. This is because the simulation queries a variable only after its corresponding block has been marked $k$ times. Hence the size of the subtree rooted at the reached leaf is at most a $2^{-dk}$ fraction of $s$. Thus, $s \geq 2^{dk}$. The theorem now follows from the correctness of Algorithm~\ref{alg:pdt to dt} described in the previous section.
\end{proof}

Towards the proof of \Cref{thm: sdt depth lifting}, we work with a modified query model instead of considering subspace decision trees. This query model is described below.

For each affine subspace choose an arbitrary ordering of the constraints.
We define a `parity constraint' query to be a query of the form $\ip{v}{x}\eqq a$ where $v \in \mathbb{F}_2^{mn}, a \in \mathbb{F}_2$.
Define a `parity constraint' query protocol as a protocol that makes parity constraint queries. The answer to a query $\ip{v}{x}\eqq a$ is either `Right' if $\ip{v}{x}=a$ or `Wrong' if $\ip{v}{x} \neq a$. The cost of the protocol is defined to be the worst-case number of queries that were answered `Wrong' during a run of the protocol.

Now given a subspace decision tree $T$, we construct a parity constraint query protocol as follows:
\begin{itemize}
    \item When an affine subspace query is made in $T$, instead query the parity constraints making up the affine subspace query in a canonical order. Stop when any of the parity constraints is answered `Wrong'.
    \item If a constraint has been answered `Wrong', traverse $T$ as if the affine subspace query had failed. If no constraint has been answered `Wrong', traverse $T$ as if the affine subspace query had succeeded.
\end{itemize}
Clearly the number of wrong answers in any execution of this protocol is upper bounded by the worst case number of affine subspace queries made by $T$, and the parity constraint protocol computes the same relation as the original subspace decision tree.

\begin{proof}[Proof of \Cref{thm: sdt depth lifting}]
From a subspace decision tree computing $f \circ g$, construct a parity constraint query protocol $T$ computing $f \circ g$ without increasing the cost, as described above.
Note that a parity constraint query protocol tree is just a parity decision tree except that for every internal node its outgoing edges have an extra label of `Right' and `Wrong'.

    We modify Lines~\ref{line: b'choice} and~\ref{line: gotosmallerchild} of Algorithm~\ref{alg:pdt to dt} as follows. We look at the current query node in the PDT to see what answer $b' \in \F_2$ corresponds to a `Wrong' edge. We set $A[\num]$ to $b' + b$ so that in Line~\ref{line: setchild} the simulation does take the `Wrong' edge.
    
    Hence for every $i \in \bigcup_{j \in [n]} \MARK[j]$, the parity query $M[i]$ corresponds to a `Wrong' answer. Thus note that cost$(T) \geq \abs{\bigcup_{j \in [n]} \MARK[j]}$. On the other hand, the query $x_j$ is made by the simulation only when $\MARK[j]$ gets $k$ elements in it. So the number of queries made is at most $\abs{\bigcup_{j \in [n]} \MARK[j]}/k \leq \text{cost}(T)/k$. The correctness of this modified algorithm follows from the correctness of Algorithm~\ref{alg:pdt to dt} described in the previous section, along with Observation~\ref{obs: markedanswersirrelevant}.
\end{proof}

\Cref{cor: indexing lifting} and \Cref{cor: ip lifting} follow from \Cref{thm: pdt depth lifting}, \Cref{thm: sdt depth lifting}, \Cref{thm: pdt size lifting}, the fact that $\IND_m$ is $m$-stifled (Claim~\ref{claim: ind property p}) and the fact that $\IP_m$ is $1$-stifled (\Cref{claim: ip stifled}).

\subsection{Non-adaptive and round-respecting simulations}

When run on a non-adaptive parity decision tree for $f \circ g$ of cost $d$, our simulation algorithm actually results in a non-adaptive decision tree for $f$ of cost $d/k$. This observation relies on the fact that in the course of every iteration (of Line~\ref{line: simulationloop} of Algorithm~\ref{alg:pdt to dt}), the updates done to $M, \MARK, Q$ and $\FREE$ depend solely on the sequence of parities $P$ that are processed. More formally,

\begin{itemize}
    \item The outputs $P'$ and $j$ from the $\rowreduce$ routine are functions of $P,M,\MARK,k$ and $\FREE$. They \emph{do not} depend on the input $A$. $M$ and $\MARK$ are then updated depending solely on these outputs.
    \item The decision to query a variable $x_j$ (and hence to append $j$ to $Q$, in Lines~\ref{line: query} and~\ref{line: updateq} of Algorithm~\ref{alg:pdt to dt}) depends only on $\MARK$, and the resulting update to $\FREE[j]$ (Line~\ref{line: update free} and the two preceding lines of Algorithm~\ref{alg:pdt to dt}) only depends on $M$ and $\MARK$.
\end{itemize}

Now consider a run of the simulation algorithm on a non-adaptive PDT. By Observation~\ref{obs: cost} this results in a DT that computes $f$ with cost at most $d/k$. By the observations above and since the parity decision tree is non-adaptive, the simulation algorithm processes the same sequence of parity queries regardless of what path it takes down the non-adaptive PDT. Hence the set $Q$ (and hence the queries made) does not depend on what path the simulation algorithm takes, and they can be made non-adaptively. Thus, we obtain the following theorem where $\NADT(\cdot)$ denotes non-adaptive decision tree complexity and $\NAPDT(\cdot)$ denotes non-adaptive parity decision tree complexity.
\begin{theorem}\label{thm: nonadaptive}
Let $g:\{0,1\}^m \to \{0,1\}$ be a $k$-stifled function, and let $f \subseteq \zone^n \times \R$ be a relation. Then,
\[
\NAPDT(f \circ g) \geq \NADT(f) \cdot k.
\]
\end{theorem}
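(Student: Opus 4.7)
The plan is to reuse the simulation from the proof of \Cref{thm: pdt depth lifting} essentially verbatim, and to extract one additional structural property: when the input PDT is non-adaptive, the sequence of variables $x_j$ queried by the simulation does not depend on the input $x$. Concretely, I would start from a non-adaptive PDT $T$ for $f \circ g$ of depth $d$, and run \Cref{alg:pdt to dt} on it. By \Cref{obs: cost} and the correctness established in \Cref{claim: correctness}, this yields a DT for $f$ making at most $d/k$ queries; the remaining task is to show that the set $Q$ and the order in which its elements are queried can be fixed ahead of time, independently of $x$.

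The key step is to audit the data flow inside the simulation and verify that every quantity controlling a query decision is a function only of the parity sequence presented by the PDT, not of the answer array $A$. Concretely, I would check that in \Cref{alg:rowreduce} the outputs $P'$ and $\markedindex$ depend only on $(P, M, \MARK, \FREE)$ (the correction bit $b$ depends on $A$, but $b$ only feeds into $A[\num]$, never into the query decisions). Consequently the updates to $M$, $\MARK$, $\FREE$, and $Q$ in \Cref{alg:pdt to dt} (Lines~\ref{line: mnump'}, \ref{line: markjappend}, \ref{line: query}--\ref{line: update free}) are all functions of the sequence of parities that the simulation has processed so far.

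Now I would use non-adaptivity of $T$: every root-to-leaf path of $T$ queries the \emph{same} sequence of parities $P_0, P_1, \dots, P_{d-1}$, in the same order. Combined with the previous paragraph, this means the simulation processes the same parity sequence regardless of which branch is taken at each node, and therefore the contents of $Q$ and the order in which the queries $x_j$ are issued are the same for every input $x$. Hence the resulting decision tree for $f$ is non-adaptive and has cost at most $d/k$, which gives $\NADT(f) \leq \NAPDT(f \circ g)/k$ and yields the theorem.

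The only delicate point I anticipate is guarding against a subtle dependence of $\FREE$ on $A$: because $\FREE[j]$ is reset (Line~\ref{line: update free}) using $\FINDVARS$ applied to $\{M[i]|_j : i \in \MARK[j]\}$, one has to confirm that this reset uses only the already-processed parities (which we have argued are answer-independent) and the canonical function $\FINDVARS$. Once that is checked, together with the analogous check for the $\PropP_g$ call (which writes into $y$ but does not affect subsequent query decisions, only the bookkeeping used in the correctness analysis), the argument goes through, and \Cref{obs: markedanswersirrelevant} is not even needed here since we are lifting from non-adaptive DT rather than from $\sDT$.
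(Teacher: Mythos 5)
Your proposal is correct and matches the paper's own argument essentially verbatim: the paper also proves \Cref{thm: nonadaptive} by running \Cref{alg:pdt to dt} on the non-adaptive PDT and observing that the updates to $M$, $\MARK$, $\FREE$, and $Q$ depend only on the sequence of processed parities (not on the answer array $A$), so that non-adaptivity of the PDT forces the queried set $Q$ to be input-independent. Your extra check on the $\FINDVARS$ reset of $\FREE[j]$ is exactly the right delicate point and is handled the same way in the paper.
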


This argument can easily be adapted to say that when run on a `$t$-round' PDT (that is, parity queries are done simultaneously in each round), the simulation algorithm results in a $t$-round DT with the guarantee that if the parity decision tree has made at most $d_i$ queries during its first $i$ rounds, the decision tree makes at most $d_i/k$ queries during its first $i$ rounds.

\bibliography{bibo}

\appendix

\section{Deferred proofs}\label{app: stifled proofs}

\begin{claim}\label{claim: sdt lifting implies depth-size lifting}
    \Cref{thm: pdt size lifting} follows (up to a constant factor in the exponent in the RHS) from \Cref{thm: sdt depth lifting}. \Cref{thm: sdt depth lifting} follows (up to a $\log n$ factor in the RHS) from \Cref{thm: pdt size lifting}.
\end{claim}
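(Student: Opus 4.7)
The plan is to prove the two implications separately, each through a size-depth translation between parity and subspace decision trees computing the same relation.

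For the first implication (\Cref{thm: sdt depth lifting} $\Rightarrow$ \Cref{thm: pdt size lifting}), the approach is to show that $\sDT(h) \leq O(\log \PDTsize(h))$ for every relation $h$. Given a PDT $T$ of size $s$ for $f \circ g$, observe that for every node $w$ of $T$, the set $V_w$ of inputs that reach $w$ is an affine subspace of $\mathbb{F}_2^{mn}$ defined by the parity constraints along the root-to-$w$ path. The plan is to build an sDT that maintains a ``still-possible'' set $L$ of PDT leaves, initialized to all $s$ leaves. At each step, apply the centroid-type property of binary trees to the smallest subtree of $T$ containing all leaves in $L$: there exists a node $w$ of $T$ whose descendant leaves in $L$ number between $|L|/3$ and $2|L|/3$. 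Issue the subspace query ``is the input in $V_w$?''; on ``yes,'' replace $L$ by $L \cap (\text{subtree at }w)$; on ``no,'' remove those leaves from $L$. In either case $|L|$ becomes at most $2|L|/3$, so after $O(\log s)$ queries we have $|L|=1$ and we output the value at the unique surviving leaf. Correctness holds because $L$ always contains the leaf that the input actually reaches in $T$. Combined with \Cref{thm: sdt depth lifting}, this gives $\log \PDTsize(f \circ g) \geq \Omega(\sDT(f \circ g)) \geq \Omega(\DT(f) \cdot k)$, which is \Cref{thm: pdt size lifting} up to a constant factor in the exponent.

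For the reverse implication (\Cref{thm: pdt size lifting} $\Rightarrow$ \Cref{thm: sdt depth lifting}), the approach is to show $\log \PDTsize(h) \leq \sDT(h) \cdot \log(mn+1)$. Given an sDT $T$ of depth $d$ for $f \circ g$, replace each internal node's subspace query---expressible as ``does $Ax = b$?'' for some $r \times mn$ matrix $A$ with $r \leq mn$---by a chain of $r$ parity queries $\ip{A_i}{x} \eqq b_i$ tested in order. As long as every answer along the chain matches $b$, stay on the chain; on the first disagreement, exit into a copy of the original ``no'' subtree; on successful completion of the chain, enter the original ``yes'' subtree. The resulting tree is a PDT computing $f \circ g$, and a short induction on the depth gives the size recursion $S(d) \leq (mn+1) \cdot S(d-1)$, hence $\PDTsize(f \circ g) \leq (mn+1)^{\sDT(f \circ g)}$. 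Combined with \Cref{thm: pdt size lifting}, this yields $\sDT(f \circ g) \cdot \log(mn+1) \geq \DT(f) \cdot k$, i.e., $\sDT(f \circ g) = \Omega(\DT(f) \cdot k / \log n)$ when $m$ is polynomially bounded in $n$ (in particular when $m$ is constant, as in our applications).

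The main subtlety---and the main obstacle, though a minor one---is in the first direction: the centroid argument must be invoked on the induced subtree spanning $L$ rather than on $T$ itself, since what matters is the fraction of \emph{still-possible} leaves. This is handled by the standard descent trick: start at the root of the spanning subtree and repeatedly step into the child whose subtree contains more $L$-leaves, stopping when it contains at most $2|L|/3$ of them; at that point it still contains strictly more than $|L|/3$. Once this is in hand, both directions reduce to routine calculations exploiting the duality between membership in an affine subspace and the conjunction of the parities defining it.
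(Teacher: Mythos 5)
Your proposal is correct and follows essentially the same route as the paper: the paper also reduces the claim to the two-sided relation $\log\PDTsize(F)/\log(\text{arity}) \leq \sDT(F) \leq O(\log\PDTsize(F))$, proving the upper bound by the same centroid/tree-balancing trick (finding a node with between a $1/3$ and $2/3$ fraction of the live leaves and issuing one subspace query for "does the input reach it") and the lower bound by the same expansion of each subspace query into a chain of parity queries with the size recursion you give. Your bookkeeping via the live-leaf set $L$ versus the paper's pruning of the PDT is only a cosmetic difference.
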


This claim immediately follows from the following bounds relating $\PDTsize$ and $\sDT$.

\begin{claim}\label{claim: pdtsize sdt relationship}
    Let $F \subseteq \zone^n \times \R$ be a relation. Then,
    \[
    \frac{\log \PDTsize(F)}{\log n} \leq \sDT(F) \leq 2\log\PDTsize(F).
    \]
\end{claim}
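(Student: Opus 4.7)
The plan is to prove the two inequalities separately by standard decomposition arguments.

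For the upper bound $\sDT(F) \leq 2\log\PDTsize(F)$: Given a PDT $T$ with $s := \PDTsize(F)$ leaves, I would construct an sDT by a recursive tree-centroid decomposition. The key observation is that for any node $v$ of $T$, the set of inputs reaching $v$ is an affine subspace $L_v$ of $\F_2^n$ (the intersection of the parity-answer constraints along the root-to-$v$ path), so membership in $L_v$ is a legal sDT query. By the standard tree-centroid argument --- walk from the root down the heavier-subtree child at each step and stop the first time the subtree size drops to at most $2s/3$ --- one finds a node $v$ whose subtree has leaf count in the range $(s/3, 2s/3]$. The sDT queries $L_v$; on a yes-answer it recurses on the subtree at $v$, and on a no-answer it recurses on the rest of $T$. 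In both cases the number of remaining relevant leaves is at most $2s/3$, so the recurrence $T(s) \leq 1 + T(2s/3)$ yields sDT depth at most $\log_{3/2} s < 2\log_2 s$ for $s \geq 2$ (and the claim is trivial for $s = 1$).

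For the lower bound $\log\PDTsize(F)/\log n \leq \sDT(F)$, equivalent (up to a small multiplicative slack) to $\PDTsize(F) \leq n^{\sDT(F)}$: starting from an sDT of depth $d$, I would simulate it by a PDT via a local gadget substitution. An sDT node querying an affine subspace $L$ of codimension $k \leq n$, defined by $k$ parity constraints $\langle p_i, x\rangle = c_i$, is replaced by a depth-$k$ path gadget that queries $p_1,\dots,p_k$ in order: matching all $k$ answers routes to the yes-subtree, and the first mismatch at depth $i \in [k]$ routes to a (separate copy of the) no-subtree. The gadget contributes one yes-endpoint and at most $k \leq n$ no-endpoints, so substituting top-down from the sDT root the yes-subtree appears once while the no-subtree is replicated up to $n$ times in the resulting PDT. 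Solving the recurrence $P(d) \leq (n+1) P(d-1)$ with $P(0) = 1$ gives $\PDTsize(F) \leq (n+1)^{\sDT(F)}$, which matches the claim up to the $n$-vs-$(n+1)$ gap.

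I expect the main step requiring care to be the inductive set-up of the upper bound: one must track not just the PDT but a ``subset of still-in-play leaves'' through the recursion, and verify that the centroid argument still produces a splitting node relative to that subset (a routine generalization of the familiar tree-centroid fact, since the argument only uses monotonicity of leaf-counts along a root-to-leaf walk in the heavier direction). The lower-bound simulation is purely mechanical once the gadget is defined; the only subtlety there is verifying that the gadget correctly routes each mismatch-coset into the no-subtree so that the combined PDT computes $F$, which follows immediately from the definition of the gadget. The $(n+1)^d$ versus $n^d$ discrepancy is absorbed by the stated claim, or one could instead prove the cleaner $\log\PDTsize(F)/\log(n+1) \leq \sDT(F)$.
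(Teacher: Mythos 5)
Your proposal is correct and takes essentially the same approach as the paper's proof: the same ``does the input reach node $v$'' affine-subspace query at a $1/3$--$2/3$ balancing node for the upper bound, and the same constraint-by-constraint simulation of each subspace query (first violated constraint routes to the no-subtree) for the lower bound. Your $(n+1)^d$ accounting of the simulation is in fact slightly more careful than the paper's stated $n^d$; as you note, the discrepancy is immaterial to the claim.
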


We show the first inequality by a naive simulation of a sDT by a PDT, and the second one via a standard tree/protocol-balancing trick.
\begin{proof}
We prove the inequalities in order.
\begin{itemize}
    \item Consider a sDT $T$ of depth $d$ computing $F$. Consider the following PDT computing $F$: Start from the root of $T$ and traverse down until a leaf in the following way. For a subspace query encountered at a node, query its parities in order until encountering a parity that violates the subspace constraint. At this point go to the 0-child of the current node in $T$ and recurse. If no such parity is encountered then go to the 1-child of the current node in $T$ and recurse. Note that the 0-child of a node can be reached in at most $n$ different ways (one for each constraint of the subspace, and since the co-dimension of a subspace is at most $n$) and the 1-child can be reached in exactly one way. Iterating until we reach a leaf, we obtain a PDT of size at most $n^d$. Thus,
    \[
    \log \PDTsize(F) \leq \sDT(F)\log n.
    \]
    \item Consider a PDT $T$ of size $s = \PDTsize(F)$ computing $F$. We first claim that there is an internal node $v$ in $T$ such that the number of leaves $\ell$ in the subtree rooted at it is at least $s/3$ and at most $2s/3$. To see this, first identify a node $v'$ in $T$ with the smallest number of leaves under it among all nodes with at least $2s/3$ leaves under it. The child of this node with the larger subtree has at least $s/3$ leaves and at most $2s/3$ leaves (by the minimality of $v'$). A subspace DT protocol for $F$ is as follows: Determine whether the input reaches $v$ or not. This can be done with one subspace query since it involves checking whether a set of affine parity constraints are all satisfied (and this is precisely the indicator of an affine subspace). If the input reaches $v$, then continue in the tree rooted at $v$. In the other case, we can eliminate the tree rooted at $v$ from $T$. In either case, the number of leaves in the resultant PDT is at most $2s/3$. Recursively using the same protocol until we reach a leaf, this gives a subspace DT computing $F$ of cost $\log_{3/2}s \leq 2\log s$.
\end{itemize}
\end{proof}

We next state a claim that shows that the randomized communication complexity of a function is bounded from above by its PDT size. Let $\mathsf{R}^{cc}_{\epsilon}(\cdot)$ denote $\epsilon$-error randomized communication complexity. It is well known that the public-coin $\epsilon$-error randomized communication complexity of the Equality function on $2n$ input bits is $O(\log(1/\epsilon))$. It is easy to see that the same protocol and same upper bound also holds for the communication complexity of checking whether a joint input lies in an arbitrary (but known to both players) affine subspace.
\begin{claim}\label{claim: randomized lifting implies dt-pdtsize lifting}
Let $F \subseteq \zone^n \times \zone^n \times \R$ be a relation. Then,
\[
\mathsf{R}^{cc}(F) \leq O(\log(\PDTsize(F)) \times \log \log (\PDTsize(F))).
\]
\end{claim}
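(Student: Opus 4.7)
The plan is to reduce the statement to a subspace decision tree computation using the tree-balancing already established in \Cref{claim: pdtsize sdt relationship}, and then to simulate each subspace query by the standard public-coin randomized equality protocol. Set $s = \PDTsize(F)$. By the second inequality in \Cref{claim: pdtsize sdt relationship}, $F$ is computed by a subspace decision tree $T$ of depth $d \leq 2\log s$. Every internal node of $T$ asks whether $(x,y) \in V$ for some affine subspace $V \subseteq \F_2^{2n}$, which is equivalent to asking whether $M (x,y)^\top = c$ for some $M \in \F_2^{r \times 2n}$ and $c \in \F_2^r$ with $r \leq 2n$.

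I would then give a randomized communication protocol that walks down $T$, and at each internal node implements the corresponding subspace query as follows. Write $M = [M_x \mid M_y]$ so that the condition $M(x,y)^\top = c$ becomes $M_x x = M_y y + c$. Alice locally computes the vector $u = M_x x$ from her input and Bob locally computes $v = M_y y + c$ from his input (both know $M$ and $c$ since these are specified by the node of $T$). Determining whether $u = v$ is an instance of Equality on strings of length at most $2n$, which admits a public-coin randomized protocol of cost $O(\log(1/\epsilon))$ with one-sided error $\epsilon$ (the same hint mentioned right before the claim). The answer to the Equality subroutine is taken as the answer to the subspace query, and the simulation follows the corresponding child of $T$.

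To control the total error across $d$ queries, I would set the error parameter of each Equality call to $\epsilon = 1/(10 d)$. By a union bound the probability that any of the $d$ subspace queries is answered incorrectly is at most $d \cdot \epsilon \leq 1/10$, so conditioned on all answers being correct the protocol traces out the same root-to-leaf path as $T$ on input $(x,y)$ and therefore outputs a valid element of $F(x,y)$. The communication cost is
\[
d \cdot O(\log(1/\epsilon)) \;=\; O\bigl(\log s \cdot \log\log s\bigr) \;=\; O\bigl(\log \PDTsize(F) \cdot \log\log \PDTsize(F)\bigr),
\]
which is the stated bound.

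There is essentially no serious obstacle: the only minor point to be careful about is making sure the per-query error budget is $\Theta(1/d)$ rather than constant, so that the union bound over all $d = O(\log s)$ nodes along the traversed path still yields constant total error; this is why the $\log\log$ factor appears. If desired, standard error reduction then boosts the final error to any constant $\epsilon > 0$ at no change in the asymptotic cost.
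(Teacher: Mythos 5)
Your proposal is correct and is essentially the paper's argument: the paper performs the same tree-balancing (the second bullet of Claim~\ref{claim: pdtsize sdt relationship}) inline within the communication protocol, testing membership in the affine subspace of inputs reaching a balanced node via the Equality protocol with per-step error $O(1/\log s)$, whereas you invoke that claim as a black box to first obtain a depth-$O(\log s)$ subspace decision tree and then simulate it query by query. The two routes are the same idea, modularized differently, and your error-budget and cost accounting are both right.
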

\begin{proof}
Consider a PDT $T$ of size $s$ computing $F$. Just as in the previous proof, Alice and Bob jointly identify a node $v'$ in $T$ that has at least $s/3$ and at most $2s/3$ leaves under it. They use $O(\log \log s)$ bits of communication to find out with error probability at most $1/(6 \log s)$ whether the computation of $T$ on their joint input reaches $v'$. This can be done since the set of inputs reaching $v'$ is an affine subspace. Irrespective of the outcome, the size of the resultant PDT reduces to at most $2s/3$. The players then recurse at most $2 \log s$ times to reach a leaf and output the value at that leaf. By a union bound, the error of this protocol is at most $1/3$. The total cost of this protocol is at most $O(\log s \log \log s)$.
\end{proof}
In particular, the claim above implies that a lifting theorem from randomized decision tree complexity of $f$ to $\mathsf{R}^{cc}(f \circ g)$ (see, for example,~\cite{GPW20, CFKMP21}, for such lifting theorems, albeit for gadgets of super-constant size) implies a lifting theorem from randomized decision tree complexity of $f$ to $\PDTsize(f \circ g)$.

\begin{proof}[Proof of \Cref{claim: ind property p}]
Let $S$ be an arbitrary subset of $[m] \sqcup [2^m]$ of size $m$, and let $b \in \zone$. We need to show that there exists a setting of variables in $[m] \sqcup [2^m] \setminus S$ that fixes the value of $\IND_m$ to $b$. Suppose there are $\ell$ addressing variables and $m - \ell$ target variables in $S$. Fix the remaining $2^m - m + \ell$ target variables to value $b$. 

We now claim that there exists a setting of the free (outside of $S$) $m - \ell$ addressing variables such that the `relevant' target variable is always one of the free (outside of $S$) target variables, regardless of the setting of the $\ell$ `constrained' addressing variables in $S$. This would prove the claim since we have set all of the free target variables to value $b$, and hence $\IND_m$ would always output $b$ regardless of the assignments to variables in $S$. Towards a contradiction, suppose not. Then, for each setting of the free $m - \ell$ addressing variables there exists a setting of the constrained addressing variables such that the relevant target variable is one of the $m - \ell$ constrained target variables. Thus,
\[
2^{m - \ell} \leq m - \ell,
\]
which is easily seen to be false for all $\ell \in \cbra{0, 1, \dots, m-1}$ and all integers $m \geq 1$.
\end{proof}

\begin{proof}[Proof of \Cref{claim: ip stifled}]
Let the inputs to $\IP_m$ be denoted by $x_1, x_2, \dots, x_m, y_1, y_2, \dots, y_m$. Pick an arbitrary set of size 1. Without loss of generality we may assume that this set is $\cbra{x_1}$. Given $b \in \zone$ we need to show that there exists a setting of the variables $x_2, \dots, x_m, y_1, \dots, y_m$ that fixes the value of $\IP_m$ to $b$. To this end, fix $y_1 = 0$. This implies $x_1 \wedge y_1 = 0$. Hence, regardless of the value of $x_1$, the function now evaluates to $\IP_{m-1}(x_2, \dots, x_m, y_2, \dots, y_m)$. Since $m \geq 2$, $\IP_{m-1}$ is a non-constant function, and we can set the values of $x_2, \dots, x_m, y_2, \dots, y_m$ such that $\IP_{m-1}(x_2, \dots, x_m, y_2, \dots, y_m)$ (and thus $\IP_m(x_1, x_2, \dots, x_m, y_1, y_2, \dots, y_m)$) evaluates to $b$. This concludes the proof.
\end{proof}

\begin{remark}
The parameters in \Cref{claim: ind property p} and \Cref{claim: ip stifled} cannot be improved.
That is,
\begin{itemize}
    \item $\IND_m$ is not $(m+1)$-stifled: Consider the set $S$ of all addressing variables and a single target variable. We have $|S| = m + 1$. Regardless of how we set the remaining variables, the restricted function is never fixed because the addressing variables could be assigned values to point to the target variable in $S$ (which could have value either 0 or 1).
    \item $\IP_m$ is not $2$-stifled: Consider the set $x_1, y_1$. Regardless of how we set the remaining variables, the function value is different when $x_1 \wedge y_1 = 1$ as compared to when $x_1 \wedge y_1 = 0$.
\end{itemize}
\end{remark}

We observe below that `Majority is stiflest'. More precisely, we show that $\MAJ_n : \zone^n \to \zone$ is $(\lceil n/2 \rceil - 1)$-stifled and no Boolean function on $n$ inputs is $\lceil n/2 \rceil$-stifled.
\begin{claim}[Majority is stiflest]
    Let $f : \zone^n \to \zone$ be a Boolean function. Then, $f$ cannot be $\lceil n/2 \rceil$-stifled. Moreover $\MAJ_n$ is $(\lceil n/2 \rceil - 1)$-stifled.
\end{claim}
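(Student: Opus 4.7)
My plan is to prove both halves by elementary counting, leveraging a symmetric choice of complement for the impossibility direction. For the impossibility half, I set $k := \lceil n/2 \rceil$ and exploit the key observation that $n - k = \lfloor n/2 \rfloor \leq k$, which means that for any set $S \subseteq [n]$ of size $k$, both $S$ and its complement $[n] \setminus S$ are simultaneously valid inputs to the $k$-stifling property.

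Assuming for contradiction that some $f : \{0,1\}^n \to \{0,1\}$ is $k$-stifled, I would apply the stifling property twice: once to $S := \{1,\ldots,k\}$ with $b = 0$, obtaining a setting $z_0 \in \{0,1\}^{[n] \setminus S}$ that forces $f \equiv 0$ on any completion; and once to $[n] \setminus S$ with $b = 1$, obtaining a setting $z_1 \in \{0,1\}^S$ that forces $f \equiv 1$ on any completion. The hybrid input $x^\star \in \{0,1\}^n$ defined by $x^\star|_S = z_1$ and $x^\star|_{[n] \setminus S} = z_0$ then completes both constraints simultaneously, yielding $f(x^\star) = 0$ and $f(x^\star) = 1$, a contradiction.

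For the $\MAJ_n$ half, I set $k := \lceil n/2 \rceil - 1$ and exhibit an explicit stifling witness for every $S \subseteq [n]$ with $|S| \leq k$: to force $\MAJ_n \equiv b$, set every coordinate outside $S$ to $b$. A Hamming-weight count verifies correctness: in the $b = 1$ case, any completion has weight at least $n - |S| \geq n - k = \lfloor n/2 \rfloor + 1 \geq n/2$, so $\MAJ_n$ evaluates to $1$; in the $b = 0$ case, any completion has weight at most $|S| \leq \lceil n/2 \rceil - 1 < n/2$, so $\MAJ_n$ evaluates to $0$. I do not expect any substantive obstacle here; the only subtlety is confirming that the floor/ceiling inequalities $n - k \leq k$, $\lfloor n/2 \rfloor + 1 \geq n/2$, and $\lceil n/2 \rceil - 1 < n/2$ behave uniformly across both parities of $n$, which is immediate in each case.
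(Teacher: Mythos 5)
Your proposal is correct and follows essentially the same argument as the paper: the impossibility half combines the two disjoint stifling witnesses for $S$ and $\bar{S}$ (using $|\bar S| = \lfloor n/2\rfloor \leq \lceil n/2\rceil$) into a single contradictory input, and the Majority half sets all coordinates outside $S$ to $b$. The Hamming-weight verification you spell out is implicit in the paper but matches exactly.
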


\begin{proof}
We prove the two claims in order.
\begin{itemize}
    \item Towards a contradiction, assume $f$ is $\lceil n/2 \rceil$-stifled. Pick an arbitrary set $S$ of size $\lceil n/2 \rceil$. By definition, there exists a setting of variables in $\bar{S}$ that forces the value of $f$ to 0. Since $|\bar{S}| = \lfloor n/2 \rfloor \leq \lceil n/2 \rceil$, the assumption also implies the existence of a setting of variables in $S$ that forces the value of $f$ to 1. Since these settings are on disjoint sets of variables, this implies a setting of variables that forces the value of $f$ to both 0 and 1, which yields a contradiction.
    \item Pick an arbitrary set of variables $S$ of size $\lceil n/2 \rceil - 1$ and an arbitrary $b \in \zone$. By definition of the Majority function, fixing all variables in $\bar{S}$ to value $b$ forces the function value to be $b$, concluding the proof.
\end{itemize}
\end{proof}

\begin{proof}[Proof of \Cref{claim: random stifled}]
Fix $S \subseteq [m]$ with $|S| = k = \log m/2$. For $z \in \zone^{[m] \setminus S}$ and $z' \in \zone^S$, let the input $(z, z') \in \zone^m$ be defined by $(z, z')|_S = z'$ and $(z, z')|_{[m] \setminus S} = z$. We have for every fixed $z \in \zone^{[m] \setminus S}$,
\[
\Pr[\forall z' \in \zone^S, g(z, z') = 0] = \frac{1}{2^{2^k}}.
\]
Since these events are independent for each $z \in \zone^{[m]\setminus S}$,
\[
\Pr[\nexists z \in \zone^{[m] \setminus S} \text{ such that } \forall z' \in \zone^S, g(z, z') = 0] = \rbra{1 - \frac{1}{2^{2^k}}}^{2^{m - k}}.
\]
Thus, the probability that `$S$ isn't 0-stifled' is at most $\rbra{1 - \frac{1}{2^{2^k}}}^{2^{m - k}}$. The same bound holds for the probability that $S$ isn't 1-stifled. By a union bound over all $S \subseteq [m]$ with $|S| = k$, we get
\begin{align*}
\Pr[g \text{ is not $k$-stifled}] & \leq \binom{m}{k} \cdot 2\rbra{1 - \frac{1}{2^{2^k}}}^{2^{m - k}}\\
& \leq 2^{k \log m + 1} \cdot \exp\rbra{-2^{m - k - 2^k}}\tag*{using standard inequalities}\\
& \leq 2^{(\log^2m)/2 + 1 - 2^{(m - \log m - \sqrt{m}) \log e}}\\
& \leq 1/10,
\end{align*}
where the last inequality holds for sufficiently large $m$.
\end{proof}

\end{document}